\documentclass[11pt,a4paper,psfig,times]{article}
\voffset=-2.0cm
\hoffset=-1.75cm

\setlength{\textwidth}{16.2cm}
\setlength{\textheight}{24.4cm}

\usepackage{latexsym}
\usepackage{epsfig}
\usepackage{times}
\usepackage{amssymb}
\usepackage{enumerate}
\usepackage{bm}
\usepackage{enumitem}
\usepackage{mathtools}
\usepackage{xfrac}
\usepackage{hyperref}

\usepackage[ruled,linesnumbered]{algorithm2e}

\usepackage{tikz}

\usepackage{amsthm, amsmath,amsfonts, amssymb}

\newcommand{\ignore}[1]{{}}

\def\p1{\phantom{1}}

\newtheorem{theorem}{Theorem}
\newtheorem{corollary}{Corollary}
\newtheorem{lemma}{Lemma}

\newtheorem{definition}{Definition}
\newtheorem{fact}{Fact}

\newcommand{\full}[1]{{}}

\begin{document}

\begin{titlepage}

\title{Tight Bounds for Online Matching in \\ Bounded-Degree Graphs with Vertex Capacities
}
\author{Susanne Albers\thanks{Department of Computer Science, Technical University of Munich. {\tt albers@in.tum.de}} \and Sebastian Schubert\thanks{Corresponding author. Department of Computer Science, Technical University of Munich. {\tt sebastian.schubert@tum.de}}}
\maketitle

\thispagestyle{empty}


\begin{abstract}
We study the $b$-matching problem in bipartite graphs $G=(S,R,E)$. Each vertex $s\in S$ is a server with individual 
capacity $b_s$. The vertices $r\in R$ are requests that arrive online and must be assigned instantly to an eligible server. 
The goal is to maximize the size of the constructed matching. We assume that $G$ is a $(k,d)$-graph~\cite{NW}, where 
$k$ specifies a lower bound on the degree of each server and $d$ is an upper bound on the degree of each request. 
This setting models matching problems in timely applications.

We present tight upper and lower bounds on the performance of deterministic online algorithms. In particular, we develop 
a new online algorithm via a primal-dual analysis. The optimal competitive ratio tends to~1, for arbitrary $k\geq d$, 
as the server capacities increase. Hence, nearly optimal solutions can be computed online. Our results also hold for 
the vertex-weighted problem extension, and thus for AdWords and auction problems in which each bidder issues individual, 
equally valued bids.

Our bounds improve the previous best competitive ratios.
The asymptotic competitiveness of~1 
is a significant improvement over the previous factor of $1-1/e^{k/d}$, for 
the interesting range where $k/d\geq 1$ is small. Recall that $1-1/e\approx 0.63$. Matching problems that admit a competitive ratio arbitrarily close to~1 are rare. Prior results rely on randomization or
probabilistic input models. 
\end{abstract}

\end{titlepage}

\section{Introduction}
Maximum matching is a fundamental problem in computer science. In a seminal paper Karp, Vazirani and Vazirani~\cite{KVV} introduced
online matching in bipartite graphs $G=(S\cup R,E)$. The vertices of $S$ are known in advance, while the vertices of $R$ (requests)
arrive one by one and must be matched immediately to an eligible partner. The $b$-matching problem is a generalization where the 
vertices of $S$ (servers) have capacities and may be matched multiple times, see e.g.~\cite{KP}. Online bipartite matching
and capacitated extensions have received tremendous research interest over the past 30~years. In this paper we study the
$b$-matching problem in bounded-degree graphs, defined in~\cite{NW}. We assume that there is a lower bound on the degree of 
each server $s\in S$, meaning that there is a certain demand for each server. Furthermore we assume that there is an upper bound
on the degree of each $r\in R$, i.e.\ each request can only be assigned to a subset of the servers. This setting models 
matching problems in many timely applications, as we will describe below. 

More formally, we investigate the following problem. Again, let $G=(S\cup R,E)$ be a bipartite graph, where the vertices
of $S$ are servers and the vertices of $R$ are requests. The set $S$ is known in advance. Each server $s\in S$ has an
individual capacity $b_s\in \mathbb{N}$, indicating that the server can be matched with up to $b_s$ requests. The vertices
of $R$ arrive online, one by one.  Whenever a new request $r\in R$ arrives, its incident edges are revealed. The request 
has to be matched immediately and irrevocably to an eligible server, provided that there is one. The goal is to maximize 
the number of matching edges. We will also examine the \emph{vertex-weighted} problem extension, where additionally each 
server $s\in S$ has a weight $w_s$ and the value of every matching edge incident to $s$ is multiplied by $w_s$. Now
the goal is to maximize the total weight of the constructed matching. 

We assume that $G$ is a $(k,d)$-graph, defined by Naor and Wajc~\cite{NW}, where $k$ and $d$ are positive integers.
Each server $s\in S$ has a degree $d(s)\geq k\cdot b_s$. Each request $r\in R$ has a degree $d(r)\leq d$. Naor
and Wajc~\cite{NW} defined these graphs for the general AdWords problem. Note that the inequality $d(s)\geq k\cdot b_s$
expresses a degree bound in terms of the server capacity. This is essential. As we shall see, the performance of algorithms 
depends on the degrees $d(s)$ as a function of $b_s$. A degree bound independent of $b_s$ is vacuous for larger $b_s$.
Also, a company operating a high-capacity server expects the server to be attractive and a potential host for a large 
number of requests. Otherwise it might be beneficial to reduce the server capacity. 

The best results will be obtained if $k\geq d$. In this case the average demand for each server slot is high, compared to the
number of servers a request can be assigned to. This setting is also very relevant in applications. We will assume 
that $d\geq 2$. If $d=1$, any {\sc Greedy} algorithm constructs an optimal matching. We remark that $(k,d)$-graphs are loosely 
related to $d$-regular 
graphs in which each vertex has a degree of exactly $d$ and a capacity of~1. This graph class has been studied extensively in
computer science and discrete mathematics, see e.g.~\cite{CW,COS,CL,GKK,SCH}. 

The $b$-matching problem in $(k,d)$-graphs models many problems in modern applications, cf.~\cite{CDKL,GKKV,NW}. 
The following description also addresses the degree constraints. 

\smallskip

\emph{Video content delivery, web hosting, remote data storage}: Consider a collection of servers in a video content delivery
network, a web hosting provider, or a remote data storage service. A sequence of clients arrives, each with a request that
videos be streamed, web pages be hosted, or data be stored. Based on the servers' geographic distribution, average performance,
technology used or pricing policies, each request can only be hosted at a small subset of the servers or server 
locations. Each server has a large capacity and is well suited to service a huge number of requests in the arriving client
sequence. 

\smallskip

\emph{Job scheduling}: Consider a collection of compute servers, each with certain capabilities, located for instance in
a data center. Over a time horizon jobs arrive, requesting service. Based on computing demands, expected response time,
hardware and software requirements, each job can only be executed on a subset of the servers. During the given time horizon,
each server can process a large number of jobs and is a suitable platform to execute very many of the incoming jobs. 

\smallskip

\emph{AdWords and ad auctions}: Consider a search engine company or digital advertising platform. There is a set of 
advertisers, each with a daily budget, who wish to link their ads to users of the search engine/digital platform and
issue respective bids. The users arrive online and must be allocated instantly to the advertisers. Based on his search
keywords, browsing history and possible profile, each user is interesting to a small set of advertisers. 
Each advertiser has a decent budget and targets a large population of the users. Obviously, in this application the
advertisers correspond to the servers and the users are the incoming requests. The $b$-matching problem models the
basic setting where the bids of all advertisers are either~0 or 1. The vertex-weighted extension captures the scenario
where all the bids of an advertiser $s\in S$ have a value of~0 or $w_s$. These base cases are also studied in recent 
work by Vazirani~\cite{V}.

\smallskip

We analyze the performance of online algorithms using competitive analysis. Given an input graph $G$, let {\sc Alg}$(G)$ 
denote the size (or weight) of the matching constructed by an online algorithm {\sc Alg}. Let {\sc Opt}$(G)$ be the 
corresponding value of an optimal offline algorithm {\sc Opt}. Algorithm {\sc Alg} is $c$-competitive if 
$\textsc{Alg}(G) \geq c\cdot \textsc{Opt}(G)$ holds, for all $G$. In our analyses we will focus on 
bipartite $(k,d)$-graphs $G$. 

\vspace*{0.1cm}

{\bf Related work:}
As mentioned above, Karp et al.~\cite{KVV} introduced online matching in bipartite graphs, in which every vertex has a 
capacity of~1. The best competitive ratio of deterministic online algorithms is equal to $1/2$. Karp et al.\
proposed a randomized {\sc Ranking} algorithm that achieves an optimal competitive ratio of $1-1/e \approx 0.63$. 
Aggarwal et al.~\cite{AGKM} defined online vertex-weighted bipartite matching and devised a $(1-1/e)$-competitive algorithm.

Kalyanasundaram and Pruhs~\cite{KP} investigated the $b$-matching problem if all servers have equal capacity, i.e.\ $b_s=b$ for 
all $s\in S$. They presented a deterministic {\sc Balance} algorithm that matches a new request to an adjacent server whose 
current load is smallest. {\sc Balance} achieves an optimal competitive ratio of $1-1/(1+1/b)^b$. 
As $b$ grows, the latter expression tends from below to $1-1/e$.  Grove et al.~\cite{GKKV} and Chaudhuri et al.~\cite{CDKL} studied
$b$-matchings with a different objective. At any time an algorithm must maintain a matching between the requests that have
arrived so far and the servers. The goal is to minimize the total number of switches, reassigning requests to different 
servers.

The AdWords problem was formally defined by Mehta et al.~\cite{MSVV}. They presented a deterministic online algorithm that 
achieves a competitive ratio of $1-1/e$, under the \emph{small-bids assumption} where the bids are small compared to the 
advertisers' budgets. No randomized algorithm can obtain a better competitive factor. Buchbinder et al.~\cite{BJN} examined
a setting where the degree of each incoming user is upper bounded by $d$ and gave an algorithm with a competitive ratio of
nearly $1-(1-1/d)^d$. Azar et al.~\cite{ACR} showed that this ratio is best possible, also for randomized algorithms.
The expression $1-(1-1/d)^d$ is always greater than $1-1/e$ but approaches the latter value as $d$ increases. 

The class of $(k,d)$-graphs was defined by Naor and Wajc~\cite{NW}, who studied online bipartite matching and the AdWords
problem. They proposed an algorithm {\sc HighDegree} that matches a new request to an available neighbor of highest current 
degree. Naor and Wajc proved that {\sc HighDegree} and generalizations attain a competitive factor of $1-(1-1/d)^k$. This 
ratio holds for online bipartite matching and the vertex-weighted extension, where all vertices have a capacity of~1. 
Furthermore, it holds for AdWords with equal bids per bidder. For AdWords with arbitrary bids, the ratio is $(1-R_{\max})(1-(1-1/d)^k)$, 
where $R_{\max}$ is the maximum ratio between the bid of any bidder and its total budget. Naor and Wajc showed that no deterministic 
online algorithm for bipartite matching can achieve a competitive ratio greater than $1-(1-1/d)^k$ if $k\geq d$. For the general 
AdWords problem, they proved an upper bound of $(1-R_{\max})(1-(1-1/d)^{k/R_{\max}})$ if $k\geq d$. For increasing $k/d$, 
the expression $1-(1-1/d)^k$ tends to~1. For $k\approx d$ increasing, it approaches again $1-1/e$. 

Cohen and Wajc~\cite{CW} studied online bipartite matching in $d$-regular graphs and developed a randomized algorithm with 
a competitive ratio of $1- O(\sqrt{\log d/d})$, which tends to~1 as $d$ increases. 

Online bipartite matching and the AdWords problem have also been examined in stochastic input models. A random permutation of 
the vertices of $R$ may arrive. Alternatively, the vertices of $R$ are drawn i.i.d.\ from a known or unknown distribution. For 
online bipartite matching, the best online algorithms currently known achieve competitive ratios of 0.696 and 0.706~\cite{JL,MY}. 
The best possible performance ratios are upper bounded by 0.823~\cite{MOGS}, and hence bounded away from~1. 
For AdWords, $(1-\varepsilon)$-competitive algorithms are known, based on the small-bids assumption~\cite{DH,DSA}.


{\bf Our contributions:}
We present a comprehensive study of the $b$-matching problem in $(k,d)$-graphs. Specifically, we develop tight lower and upper bounds on 
the performance of deterministic online algorithms. The optimal competitive ratio tends to~1, for any choice of $k$ and $d$ with $k\geq d$, 
as the server capacities increase.

First, in Section~\ref{sec:alg} we investigate the setting that all servers have the same capacity, i.e. $b_s = b$ for all $s\in S$. 
We develop an optimal online algorithm {\sc WeightedAssignment} via a primal-dual analysis. The resulting strategy is simple. 
Associated with each server of current load $l$ and current degree $\delta$ is a value $V(l,\delta)$. An incoming request is assigned 
to an eligible server for which the increment $V(l,\delta+1)-V(l,\delta)$ is maximized. The values $V(l,\delta)$ can be calculated 
in a preprocessing step and retrieved by table lookup when the requests of $R$ are served. The best values $V(l,\delta)$, for variable 
$l$ and $\delta$, are determined using recurrence relations. Solving them is non-trivial because two parameters are involved.

We prove that {\sc WeightedAssignment} achieves a competitive ratio of $c^*$, where
\[c^* = 1 - \frac{1}{b} \left(\sum_{i=1}^b i \binom{kb}{b-i}{\frac{1}{(d-1)^{b-i}}}\right) \left(1-\frac{1}{d}\right)^{kb}.\] 
This is a slightly complex expression, but it is exact in all terms. In Section~\ref{sec:ub} we prove that no deterministic online 
algorithm can attain a competitive ratio greater than $c^*$, for any choices of k and d that fulfill $k \geq d$.

In Section~\ref{sec:ext} we consider two generalizations. We assume that each server $s\in S$ has an individual capacity $b_s$ and 
adapt {\sc WeightedAssignment}. As for the competitive factor, in $c^*$ the capacity $b$ has to be replaced by $b_{\min} := \min_{s\in S} b_s$. 
The resulting competitiveness is again optimal for $k\geq d$. Furthermore, we study the vertex-weighted problem extension and again adjust our 
algorithm. The competitive ratios are identical to those in the unweighted setting, for uniform and variable server capacities. 
Our results also hold for the AdWords problem where bidders issue individual, equally valued bids.

In Section~\ref{sec:comp} we analyze the optimal competitive ratio $c^*$. We prove that it tends to 1, for any $k\geq d$, 
as $b$ increases. Furthermore, we show that it is strictly increasing in $b$. 
The analyses are involved and make non-trivial use of Gauss hypergeometric functions. 

A strength of our results is that the optimal competitiveness tends to~1, for increasing server capacities. Hence almost optimal 
solutions can be computed online. For the AdWords problem, high server capacities correspond to the small-bids assumption. Remarkably, 
in this setting near-optimal ad allocations can be computed based on structural properties of the input graph if bidders issue 
individual, equally valued bids. 
Recall that, without degree bounds, the competitive ratio for the $b$-matching problem tends from below to $1-1/e\approx 0.63$.
The competitiveness of $c^*$ improves upon the previous best ratio of $1-(1-1/d)^k$~\cite{NW}. The ratio $c^*$ is
equal to $1-(1-1/d)^k$ for $b=1$ and strictly increasing in $b$, for any $k\geq d$. Our asymptotic competitiveness of~1 is a significant improvement over $1-(1-1/d)^k \approx 1-1/e^{k/d}$, for the interesting range of small $k/d\geq 1$. For $k<d$, $1-(1-1/d)^k$ and $c^*$ can become small. The algorithms are still
$\frac{1}{2}$-competitive since they match requests whenever possible. We are aware of only two other online matching problems 
that admit competitive ratios arbitrarily close to~1. As mentioned above, a randomized algorithm is known for online matching in $d$-regular 
unit-capacity graphs~\cite{CW}. For the general AdWords problem, respective algorithms exist if the input $R$ is generated according to 
probability distributions~\cite{DH,DSA}.

\section{An optimal online algorithm}\label{sec:alg}
In this section we study the setting that all servers have a uniform capacity of $b$. We develop our algorithm \textsc{WeightedAssignment}.
While serving requests, the algorithm maintains a value $V(l_s,\delta_s)$, for each server $s$ with load $l_s$ and current degree $\delta_s$. 
At any point in time during the execution of the online algorithm, the load of a server $s$ denotes the amount of matched edges incident to $s$, while the current degree indicates the total number of edges incident to $s$.
In order to construct the function $V$ and for the purpose of analysis, we formulate \textsc{WeightedAssignment} as a primal-dual algorithm. 
The primal and dual linear programs of the $b$-matching problem are given at the top of the next page. 
The primal variables $m(s,r)$ indicate if an edge $\{s,r\}\in E$ belongs to the matching. We have dual variables $x(s)$ and $y(r)$.

In the pseudocode of \textsc{WeightedAssignment}, also shown on this page, Line~7 is the actual matching step. A new request
$r$ is assigned to a neighboring server $s$ for which the difference ${V(l_s,\delta_s+1)}-V(l_s,\delta_s)$ is maximized. 
$N(r)$ is the set of adjacent servers with remaining capacity. 
All other instructions essentially update primal and dual variables so that a primal and a dual solution 
are constructed in parallel. 

Observe that no dual variable $y(r)$ of any request $r$ is ever increased by \textsc{WeightedAssignment}. The dual variable $x(s)$ of a server $s$ can be increased in Lines 9 and 11. It is increased if $s$ is matched to a neighboring request $r$ and, importantly, $x(s)$ is also increased if this $r$ is assigned
to a different server. 
\begin{align*}
\textbf{P: }\text{max} \ &\sum_{\{s,r\}\in E} m(s,r) 
& \textbf{D: }\text{min} \ &\sum_{s\in S} b \cdot x(s) + \sum_{r\in R} y(r) \\
\text{s.t.} \ &\sum_{r:\{s,r\}\in E} m(s,r) \leq  b, \ (\forall s \in S)
&\text{s.t.} \ & x(s)+y(r) \geq 1, \ (\forall \{s,r\}\in E) \\
&\sum_{s:\{s,r\}\in E} m(s,r) \leq 1, \ (\forall r\in R)
& &x(s),\,y(r)\geq 0, \ (\forall s\in S, \forall r\in R) \\
& m(s,r) \geq 0, \ (\forall \{s,r\}\in E)
\end{align*}
\begin{algorithm}[H]
\caption{\textsc{WeightedAssignment}}
  \label{alg:WA}
  \SetAlgoLined
    Initialize $x(s)=0$, $y(r)=0$ and $m(s,r)=0$, $\forall s\in S$ and $\forall r\in R$\; 
	\While{a new request $r\in R$ arrives} {
      Let $N(r)$ denote the set of neighbors $s$ of $r$ with remaining capacity\;
	 \eIf{$N(r)=\emptyset$}{
	  Do not match $r$\;}{
	  Match $r$ to $\arg \max\left\{V(l_s,\delta_s+1)-V(l_s,\delta_s): s\in N(r)\right\}$\;
	  Update $m(s,r)\leftarrow 1$\;
	  Set $x(s)\leftarrow V(l_s+1,\delta_{s}+1)$\;
	  \ForAll{$s'\neq s \in N(r)$} {
	    Set $x(s') \leftarrow V(l_{s'},\delta_{s'}+1)$\;
	  }
	 }
	}
\end{algorithm}
\smallskip

In the analysis, we will see how the function $V$ has to be defined so that \textsc{WeightedAssignment} achieves the desired competitive ratio $c^*$. 
Note that we always construct a feasible dual solution if $x(s)=1$ holds, for all servers $s\in S$, by the end of the algorithm. Here lies a crucial idea of the algorithm and the construction of $V$. We demand $V(b,\delta)=1$, for all $\delta \geq b$, and $V(l, \delta)= 1$ if $\delta\geq kb$, for all $0\leq l \leq b$. Also, $V(0,0)=0$. These constraints have two important implications.
\begin{enumerate}
    \item The dual variable $x(s)$ of a server $s$ with load $l_s$ and degree $\delta_s$ is always equal to $V(l_s,\delta_s)$:
    Consider an incoming request $r$ that is a neighbor of $s$. While $l_s<b$, it holds $N(r)\neq\emptyset$ and $r$ is matched to 
    some server. Lines~9 and 11 correctly update $x(s)$ with respect to the new load and degree values. If $l_s=b$, then 
    inductively by the first constraint $x(s)=1$ and no update is necessary.
    \item The constructed dual solution is feasible: Implication 1 and the second constraint ensure that ${x(s)=1}$ holds for all $s\in S$ by the end of the algorithm, since every server $s$ has a degree of at least $kb$.
\end{enumerate}

Let $P$ and $D$ denote the value of the primal and dual solution constructed by the algorithm, respectively. We denote a change in the value of the primal and dual solution by $\Delta P$ and $\Delta D$, respectively. It holds that the size of the matching constructed by \textsc{WeightedAssignment} is exactly $|\textsc{Alg}|=P$. Moreover, by weak duality, we get that the size of the optimum matching is $|\textsc{Opt}|\leq D$. Hence, if we were able to bound $\Delta D \leq \Delta P/c$ at every step, we would obtain a competitive ratio of $c$. 
\[
 \frac{|\textsc{Alg}|}{|\textsc{Opt}|} \geq \frac{P}{D} \geq \frac{P}{\frac{1}{c}\cdot P} = c \,.
\]
Recall that the value of the dual solution is only increased if a request $r$ is matched to a server $s$. Then, the value of the primal solution is increased by 1, while the value of the dual solution is increased by 
\[
    \Delta D = b\cdot \Big( V(l_s+1,\delta_{s}+1) - V(l_s,\delta_{s})+ \sum_{s'\in N(r)\setminus\{s\}} V(l_{s'},\delta_{s'}+1) - V(l_{s'},\delta_{s'})\Big) \,.
\]
The algorithm chooses $s$ such that $V(l_s,\delta_s+1)-V(l_s,\delta_s) \geq V(l_{s'},\delta_{s'}+1)-V(l_{s'},\delta_{s'})$ holds for all $s'\in N(r)$. Furthermore, $|N(r)|\leq d$ implies that we can bound this increase by 
\[
    \Delta D \leq b\cdot \Big( V(l_s+1,\delta_{s}+1) - V(l_s,\delta_{s}) + (d-1)\cdot \big(V(l_{s},\delta_{s}+1) - V(l_{s},\delta_{s})\big)\Big)\,.
\]
This means, that we need to determine the biggest possible constant $c^*\in (0,1]$ such that
\begin{equation}
\label{equ:temp}
    b\cdot \Big( V(l+1,\delta+1) - V(l,\delta) + (d-1)\cdot \big(V(l,\delta+1) - V(l,\delta)\big)\Big) \leq \frac{1}{c^*} 
\end{equation}
holds for all $0\leq l<b$ and all $\delta \geq l$, while still satisfying our constraints that $V(b,\cdot)= 1$ and $V(\cdot,\delta')= 1$ for $\delta'\geq kb$. For this, we define
\begin{equation*}
    p(l,\delta) := V(l+1,\delta+1) - V(l,\delta) \quad \text{and} \quad q(l,\delta) := V(l,\delta+1) - V(l,\delta) \,.
\end{equation*}
In other words, the dual variable $x(s)$ of a server $s$ with load $l$ and current degree $\delta$ is increased by $p(l,\delta)$, when a request is assigned to $s$, and increased by $q(l,\delta)$, when a neighboring request is assigned to a different server. Our constraints immediately give $p(l,\delta)=q(l,\delta)=0$, if $l=b$ or $\delta \geq kb$. Hence, we will focus on the case $0\leq l< b$ and $l\leq \delta <kb$ in the following. Rewriting and rearranging inequality~(\ref{equ:temp}) in terms of $p$ and $q$ yields
\begin{equation*}
\label{equ:main}
    q(l,\delta)  \leq \frac{1}{d-1}\left(\frac{1}{b\cdot c} - p(l,\delta) \right) \, .
\end{equation*}
We treat the values $p(i,i)$, for $0\leq i <b-1$, as the variables of our optimization, since every other $p$ and $q$ value can then be computed based on these choices. To get comfortable with the recursions and ideas in the latter part of this section, we do the following warm-up, where we consider $V(b-1,\delta)$. It holds 
\[
    V(b-1,\delta)= \sum_{i=0}^{b-2} p(i,i) + \sum_{j=b-1}^{\delta-1} q(b-1,j) \,.
\] 

Our first constraint $V(b,\delta)=1$, for all $\delta\geq b$, implies that $p(b-1,\delta)=1-V(b-1,\delta)$. We do not want to waste any potential increases in our dual variables, since we want to maximize $c$. Thus, we will choose the maximum possible value for $q(b-1,\delta)$, which is 
\[
q(b-1,\delta)  = \frac{1}{d-1}\left(\frac{1}{b\cdot c} - p(b-1,\delta) \right) = \frac{1}{d-1}\left(\frac{1}{b\cdot c} - 1+V(b-1,\delta) \right) \,.
\]
It follows that 
\begin{equation}
    \label{equ:recurrence1}
    V(b-1,\delta+1) = V(b-1,\delta) + q(b-1,\delta) = \frac{d}{d-1} V(b-1,\delta) + \frac{1}{d-1}\left(\frac{1}{b\cdot c} -1\right) \,,
\end{equation}
for all $b-1 \leq \delta < kb$ and with $V(b-1,b-1)=\sum_{i=0}^{b-2} p(i,i)$. To ease notation in the future, we further define $P_i := \sum_{j=0}^{i-1} p(j,j)$, for all $0\leq i \leq b-1$,
so that we get $P_i=V(i,i)$. Note that $P_0=0$. Solving the recurrence relation (\ref{equ:recurrence1}) yields 
\begin{align*}
    V(b-1,\delta) &= \left(\frac{d}{d-1} \right)^{\delta-(b-1)} P_{b-1} + \frac{1}{d-1}\left(\frac{1}{b\cdot c} -1\right)\cdot \sum_{i=0}^{\delta-(b-1)-1} \left(\frac{d}{d-1} \right)^i \\
    &= \left(\frac{d}{d-1} \right)^{\delta-(b-1)} P_{b-1} + \frac{1}{d-1}\left(\frac{1}{b\cdot c} -1\right)\cdot  \frac{\left(\frac{d}{d-1} \right)^{\delta-(b-1)}-1}{\left(\frac{d}{d-1} \right)-1} \\
    &= \left(\frac{d}{d-1} \right)^{\delta-(b-1)} \left( P_{b-1} + \frac{1}{b\cdot c}-1\right) +1 - \frac{1}{b\cdot c} \,.
\end{align*}
The following lemma generalizes the computation above to all other load levels. 

\begin{lemma}
\label{lem:V}
For all $l$, $0\leq l \leq b$, and for all $\delta$, $l\leq \delta\leq kb$, it holds that
\begin{equation}
    \label{equ:Vsolved}
    V(l,\delta) = \sum_{i=l}^{b-1}(-1)^{i-l} \frac{1}{(d-1)^{i-l}}\binom{\delta-l}{i-l}\left(\frac{d}{d-1}\right)^{\delta-i}\left(P_i+\frac{b-i}{b \cdot c}-1 \right) +1-\frac{b-l}{b\cdot c} \,.
\end{equation}
\end{lemma}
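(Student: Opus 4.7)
My plan is to recognize $V$ as the unique solution to a first-order linear recurrence plus two boundary conditions, and then verify that the closed form (\ref{equ:Vsolved}) satisfies both. The recurrence follows from setting the defining inequality on $q(l,\delta)$ tight (as \textsc{WeightedAssignment} does in its optimal design, since otherwise $c$ could be raised): combining $q(l,\delta)=\frac{1}{d-1}\bigl(\tfrac{1}{bc}-p(l,\delta)\bigr)$ with $p(l,\delta)=V(l+1,\delta+1)-V(l,\delta)$ and $q(l,\delta)=V(l,\delta+1)-V(l,\delta)$ yields
\begin{equation*}
V(l,\delta+1) \;=\; \tfrac{d}{d-1}V(l,\delta) \;-\; \tfrac{1}{d-1}V(l+1,\delta+1) \;+\; \tfrac{1}{(d-1)bc}
\end{equation*}
for $0\le l<b$ and $l\le\delta<kb$. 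Together with the boundary conditions $V(b,\delta)=1$ (the algorithm's design constraint) and $V(l,l)=P_l$ (telescoping $V(i,i)-V(i-1,i-1)=p(i-1,i-1)$ from $V(0,0)=0$), the recurrence pins down $V$ uniquely throughout the range of interest.

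Boundary checks against (\ref{equ:Vsolved}) are immediate: substituting $l=b$ makes the sum empty and returns $1$; substituting $\delta=l$ kills every $i>l$ because $\binom{0}{i-l}=0$, leaving $V(l,l)=P_l$ after the obvious cancellation of $1-\tfrac{b-l}{bc}$. The bulk of the proof is verifying the recurrence. I would substitute (\ref{equ:Vsolved}) for each of $V(l,\delta)$, $V(l,\delta+1)$, and $V(l+1,\delta+1)$, split each into the $i$-indexed sum plus the additive constant, and match coefficients of $\bigl(P_i+\tfrac{b-i}{bc}-1\bigr)$ on the two sides of the recurrence, separately for each $i\in\{l,\dots,b-1\}$. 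After aligning summation indices, for every $i\ge l+1$ the coefficient on the right is a sum of two binomial coefficients that collapses via Pascal's rule
\begin{equation*}
\binom{\delta-l}{i-l}+\binom{\delta-l}{i-l-1} \;=\; \binom{\delta+1-l}{i-l},
\end{equation*}
exactly the coefficient required by $V(l,\delta+1)$. The $i=l$ term appears only from $\tfrac{d}{d-1}V(l,\delta)$ and matches trivially since $\binom{\delta-l}{0}=\binom{\delta+1-l}{0}=1$.

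The step I expect to be the main obstacle is disentangling the signs $(-1)^{i-l}$ together with the powers $(d-1)^{-(i-l)}$ and $\bigl(\tfrac{d}{d-1}\bigr)^{\delta+1-i}$ under the index shift $l\mapsto l+1$ inside $-\tfrac{1}{d-1}V(l+1,\delta+1)$, and confirming that the three stray additive constants collapse correctly. A direct computation shows
\begin{equation*}
\tfrac{d}{d-1}\bigl(1-\tfrac{b-l}{bc}\bigr) \;-\; \tfrac{1}{d-1}\bigl(1-\tfrac{b-l-1}{bc}\bigr) \;+\; \tfrac{1}{(d-1)bc} \;=\; 1-\tfrac{b-l}{bc},
\end{equation*}
which is precisely the constant in $V(l,\delta+1)$ prescribed by (\ref{equ:Vsolved}). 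Once this bookkeeping is handled, every remaining cancellation reduces to Pascal's rule applied termwise, so no further combinatorial identity is needed.
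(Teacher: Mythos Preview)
Your proposal is correct but takes a genuinely different route from the paper. The paper proceeds by downward induction on $l$: at each step it solves the first-order linear recurrence in $\delta$ explicitly as
\[
V(l,\delta)=\left(\tfrac{d}{d-1}\right)^{\delta-l}P_l+\tfrac{1}{(d-1)bc}\sum_{i=0}^{\delta-l-1}\left(\tfrac{d}{d-1}\right)^i-\tfrac{1}{d-1}\sum_{i=0}^{\delta-l-1}\left(\tfrac{d}{d-1}\right)^i V(l+1,\delta-i),
\]
substitutes the induction hypothesis for $V(l+1,\cdot)$, and collapses the resulting double sum via the hockey-stick identity $\sum_{i=0}^n\binom{i}{k}=\binom{n+1}{k+1}$ (Fact~\ref{fac:binom}). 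You instead take the closed form as given and verify it: the two boundary checks are immediate, and plugging the formula into the recurrence reduces, after the sign and power bookkeeping you describe, to a single application of Pascal's rule at each index $i$. Your argument is shorter and uses a strictly weaker combinatorial identity; the paper's derivation has the virtue of producing the formula rather than only confirming it, but since the lemma statement already hands you the closed form, that advantage is not needed here.
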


\begin{proof}
By induction over $l$, starting with $l=b$ and going down to $l=0$. The induction base $l=b$ is true, because we have $V(b,\delta)=1$. Thus, we focus on the induction step $l+1\leadsto l$. Similar arguments as before yield for $l\leq \delta < kb$
\begin{equation*}
    q(l,\delta) = \frac{1}{d-1}\left(\frac{1}{b\cdot c} - p(l,\delta) \right) = \frac{1}{d-1}\left(\frac{1}{b\cdot c} - V(l+1,\delta+1) + V(l,\delta) \right) \,.
\end{equation*}
We can now define the recurrence relation for $V(l,\delta)$
\begin{align*}
    V(l,\delta+1) = V(l,\delta) + q(l,\delta) = \frac{d}{d-1} V(l,\delta) + \frac{1}{d-1}\left(\frac{1}{b\cdot c} - V(l+1,\delta+1) \right) \,,
\end{align*}
with $V(l,l)=P_l$. Solving this recurrence yields 
\begin{equation}
    \label{equ:Vtemp}
    \begin{aligned}
    V(l,\delta) = &\left(\frac{d}{d-1}\right)^{\delta-l} P_l + \frac{1}{d-1} \frac{1}{b\cdot c} \sum_{i=0}^{\delta-l-1} \left(\frac{d}{d-1}\right)^i 
    - \frac{1}{d-1} \sum_{i=0}^{\delta-l-1} \left(\frac{d}{d-1}\right)^i V(l+1,\delta-i) \,.
    \end{aligned}
\end{equation}
In the next step, we will need the following fact~\cite{GKP}.
\begin{fact}
\label{fac:binom}
For $n,k\in \mathbb{N}_0$, it holds that 
\[
\sum_{i=0}^n \binom{i}{k} = \binom{n+1}{k+1} \,.
\]
\end{fact}

Next, we apply the induction hypothesis to determine $V(l,\delta)$. For clarity, we focus on the last sum of equality~(\ref{equ:Vtemp}) first 
\begin{align*}
    &\sum_{i=0}^{\delta-l-1}  \left(\frac{d}{d-1}\right)^i  V(l+1,\delta-i) \overset{\text{IH}}{=}\sum_{i=0}^{\delta-l-1}  \left(\frac{d}{d-1}\right)^i  \left[ 
    \sum_{j=l+1}^{{b}-1}(-1)^{j-(l+1)} \frac{1}{(d-1)^{j-(l+1)}}
    \right.\\ &\left. \quad \cdot \,
    \binom{\delta-i-(l+1)}{j-(l+1)} \left(\frac{d}{d-1}\right)^{\delta-i-j}\left(P_j+\frac{{b}-j}{b\cdot c}-1\right)
    + 1-\frac{{b}-(l+1)}{b\cdot c}\right]  \\
    &=\sum_{j=l+1}^{{b}-1}  (-1)^{j-(l+1)}  \frac{1}{(d-1)^{j-(l+1)}}\left(\frac{d}{d-1}\right)^{\delta-j}\left(P_j+\frac{{b}-j}{b\cdot c}-1\right)\sum_{i=0}^{\delta-l-1} \binom{\delta-i-(l+1)}{j-(l+1)} \\
        & \quad + \left(1-\frac{{b}-(l+1)}{b\cdot c}\right) \sum_{i=0}^{\delta-l-1} \left(\frac{d}{d-1}\right)^i  \\
    &= \sum_{j=l+1}^{{b}-1} (-1)^{j-(l+1)}  \frac{1}{(d-1)^{j-(l+1)}}\left(\frac{d}{d-1}\right)^{\delta-j}\left(P_j+\frac{{b}-j}{b\cdot c}-1\right)\sum_{i=0}^{\delta-l-1} \binom{i}{j-(l+1)} \\
        & \quad + \left(1-\frac{b-(l+1)}{b\cdot c}\right) (d-1)\left( \left(\frac{d}{d-1} \right)^{\delta-l}-1 \right)  \\
    &= \sum_{j=l+1}^{b-1} (-1)^{j-(l+1)}  \frac{1}{(d-1)^{j-(l+1)}}\left(\frac{d}{d-1}\right)^{\delta-j}\left(P_j+\frac{{b}-j}{b\cdot c}-1\right)\binom{\delta -l}{j-l}  \\
        & \quad + \left(1-\frac{b-(l+1)}{b\cdot c}\right) (d-1)\left( \left(\frac{d}{d-1} \right)^{\delta-l}-1 \right)  \,,
\end{align*}
where we used Fact \ref{fac:binom} in the last step. Now, we can finish the induction step by plugging this into (\ref{equ:Vtemp})
\begin{align*}
   V(l,\delta) &= \left(\frac{d}{d-1}\right)^{\delta-l} P_l +  \frac{1}{b\cdot c} \left( \left(\frac{d}{d-1} \right)^{\delta-l}-1 \right)   - \left( 1-\frac{b-(l+1)}{b\cdot c}\right) \left( \left(\frac{d}{d-1} \right)^{\delta-l}-1 \right)\\
    & \quad 
    + \sum_{j=l+1}^{b-1} (-1)^{j-l}  \frac{1}{(d-1)^{j-l}}\left(\frac{d}{d-1}\right)^{\delta-j}\left(P_j+\frac{{b}-j}{b\cdot c}-1 \right) \binom{\delta -l}{j-l}  \\
    &= \left(\frac{d}{d-1}\right)^{\delta-l} \left(P_l+\frac{1}{b\cdot c}+\frac{b-(l+1)}{b\cdot c}-1 \right)+1-\frac{1}{b\cdot c}-\frac{b-(l+1)}{b\cdot c}\\ 
     & \quad + \sum_{j=l+1}^{b-1} (-1)^{j-l}  \frac{1}{(d-1)^{j-l}}\left(\frac{d}{d-1}\right)^{\delta-j}\left(P_j+\frac{{b}-j}{b\cdot c}-1 \right) \binom{\delta -l}{j-l}  \\
   &= \sum_{j=l}^{b-1} (-1)^{j-l}  \frac{1}{(d-1)^{j-l}}\left(\frac{d}{d-1}\right)^{\delta-j}\left(P_j+\frac{{b}-j}{b\cdot c}-1\right)\binom{\delta -l}{j-l} + 1 - \frac{b-l}{b\cdot c}\,. \qedhere
\end{align*}
\end{proof}

So far, we have only leveraged our constraint $V(b,\cdot)=1$. With our description of $V(l,\delta)$, for all $0\leq l \leq b$ and $l\leq \delta \leq kb$, we can also leverage $V(\cdot, kb)=1$ to determine $P_i$, for all $0\leq i\leq b-1$. For this, we will need the following technical lemma. 

\begin{lemma}
\label{lem:binom}
For $k,n,m \in \mathbb{N}$, with $m\geq n \geq k$, it holds that 
\begin{align*}
    \sum_{i=1}^{n-k} (-1)^i \binom{m}{i}  \binom{m-i}{n-k-i} = - \binom{m}{n-k} \, .
\end{align*}
\end{lemma}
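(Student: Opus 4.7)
The plan is to reduce the identity to a standard vanishing alternating binomial sum. Setting $j := n-k$, moving the $-\binom{m}{n-k}$ across the equals sign supplies precisely the missing $i=0$ term on the left, so the claim is equivalent to
\[
\sum_{i=0}^{j} (-1)^{i} \binom{m}{i}\binom{m-i}{j-i} = 0.
\]

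The key trick is the trinomial revision identity
\[
\binom{m}{i}\binom{m-i}{j-i} = \binom{m}{j}\binom{j}{i},
\]
valid for $0 \leq i \leq j \leq m$, which follows immediately from expanding both sides as $\frac{m!}{i!\,(j-i)!\,(m-j)!}$; the hypothesis $m \geq n \geq k$ ensures $m \geq j$ so all factorials are well-defined. After substituting this into the reformulated sum and pulling $\binom{m}{j}$ out as a common factor, what remains is $\sum_{i=0}^{j}(-1)^{i}\binom{j}{i}$, which equals $(1-1)^{j} = 0$ by the binomial theorem, provided $j \geq 1$.

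I expect no real obstacle beyond recognizing the revision identity: once it is invoked, the sum factors cleanly into $\binom{m}{j}$ times a standard alternating sum that vanishes. The only mild subtlety is the edge case $j = 0$ (that is, $n = k$), where the original sum on the left is empty but the right-hand side is $-1$; the statement is therefore meant for $n > k$, which will be consistent with how the lemma is applied later in the paper when one specializes $n = kb$ and $k \leq b-1$.
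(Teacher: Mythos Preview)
Your proof is correct and follows essentially the same approach as the paper: both use the trinomial revision $\binom{m}{i}\binom{m-i}{n-k-i}=\binom{m}{n-k}\binom{n-k}{i}$ to pull out the factor $\binom{m}{n-k}$ and then invoke $\sum_{i=0}^{n-k}(-1)^i\binom{n-k}{i}=(1-1)^{n-k}=0$. Your observation about the degenerate case $n=k$ is a nice touch (the paper's own proof tacitly assumes $n>k$ as well); note, however, that in the later application the parameters are $k=j$, $n=b-l$, $m=kb-l$, not $n=kb$.
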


\begin{proof}
Observe that 
\begin{align*}
    \binom{m}{i}  \binom{m-i}{n-k-i} &= \frac{m!}{(m-i)!i!} \cdot \frac{(m-i)!}{(m-(n-k))!(n-k-i)!} \\
        &=  \frac{(n-k)!}{(n-k-i)!i!} \cdot \frac{m!}{(m-(n-k))!(n-k)!} = \binom{n-k}{i} \binom{m}{n-k}\, .
\end{align*}
Next, notice that the binomial expansion implies that
\begin{equation*}
    0 = (1-1)^{n-k} = \sum_{i=0}^{n-k}\binom{n-k}{i}(-1)^i \,.
\end{equation*}
Hence, we can conclude
\begin{align*}
    \sum_{i=1}^{n-k} (-1)^i \binom{m}{i}  \binom{m-i}{n-k-i} &= \binom{m}{n-k}  \sum_{i=1}^{n-k} (-1)^i \binom{n-k}{i} \\ &=  \binom{m}{n-k} \left(\sum_{i=0}^{n-k} (-1)^i \binom{n-k}{i} -1\right) \\ &= \binom{m}{n-k} (0-1) = -\binom{m}{n-k} \, . \qedhere
\end{align*}
\end{proof}

\begin{lemma}
\label{lem:P}
For all $l$, $0\leq l \leq b-1$, it holds that
\begin{equation}
    \label{equ:Pbound}
   \left(\frac{d}{d-1} \right)^{kb-l} \left(P_l+\frac{b-l}{b\cdot c}-1\right) = \frac{1}{b\cdot c}\left(\sum_{i=1}^{b-l} i \binom{kb-l}{b-l-i} \frac{1}{(d-1)^{b-l-i}} \right)\,.
\end{equation}
\end{lemma}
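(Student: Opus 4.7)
The plan is to exploit the remaining boundary condition $V(l,kb)=1$ in the formula of Lemma~\ref{lem:V}. Setting $\delta=kb$ in equation~(\ref{equ:Vsolved}) and rearranging yields, for every $0\leq l\leq b-1$,
\[
\sum_{i=l}^{b-1}(-1)^{i-l}\frac{1}{(d-1)^{i-l}}\binom{kb-l}{i-l}\left(\frac{d}{d-1}\right)^{kb-i}\!\left(P_i+\frac{b-i}{b\cdot c}-1\right)=\frac{b-l}{b\cdot c}.
\]
Introducing the shorthand $A_l:=\left(\frac{d}{d-1}\right)^{kb-l}\!\left(P_l+\frac{b-l}{b\cdot c}-1\right)$, this recursion rewrites as
\[
A_l=\frac{b-l}{b\cdot c}-\sum_{m=1}^{b-l-1}(-1)^{m}\frac{1}{(d-1)^{m}}\binom{kb-l}{m}A_{l+m},
\]
and the lemma becomes the closed form $A_l=\frac{1}{b\cdot c}\sum_{i=1}^{b-l}i\binom{kb-l}{b-l-i}(d-1)^{-(b-l-i)}$.

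I would establish this closed form by \emph{reverse} induction on $l$, from $l=b-1$ down to $l=0$. The base case $l=b-1$ is immediate: only the $m=0$ term survives in the recursion, giving $A_{b-1}=\frac{1}{b\cdot c}$, which matches the claim since its right-hand side collapses to the single term $i=1$. For the inductive step, I would insert the inductive formula for each $A_{l+m}$ into the recursion and then reverse the order of summation via the substitution $t=b-l-j$, where $j$ is the summation index coming from the induction hypothesis. This conveniently groups all factors of $(d-1)^{-t}$ under the outer sum and leaves precisely the inner sum
\[
\sum_{m=1}^{t}(-1)^{m}\binom{kb-l}{m}\binom{kb-l-m}{t-m},
\]
which equals $-\binom{kb-l}{t}$ by Lemma~\ref{lem:binom} (its hypothesis $kb-l\geq t$ is satisfied because $t\leq b-l-1<b\leq kb-l$ for any $k\geq 1$). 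Substituting this back and absorbing the leading $\frac{b-l}{b\cdot c}$ as the $t=0$ contribution produces $A_l=\frac{1}{b\cdot c}\sum_{t=0}^{b-l-1}(b-l-t)\binom{kb-l}{t}(d-1)^{-t}$, and reindexing by $i=b-l-t$ gives exactly the claimed formula.

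The main obstacle is the bookkeeping of the double sum: the indices $m$ (outer in the recursion) and $j$ (inner, from the induction hypothesis) each contribute to the power of $(d-1)$ as $m+(b-l-m-j)=b-l-j$, so neither is the natural outer variable. The pivotal observation is that substituting $t=b-l-j$ makes $t$ control the power $(d-1)^{-t}$ and leaves the $m$-sum in precisely the form required by Lemma~\ref{lem:binom}; once this reindexing is carried out, the collapse of the inner sum and the recovery of the claimed formula are mechanical.
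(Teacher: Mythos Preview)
Your proposal is correct and follows essentially the same route as the paper: reverse induction on $l$ starting from the boundary condition $V(l,kb)=1$ combined with Lemma~\ref{lem:V}, then swapping the double sum and collapsing the inner sum via Lemma~\ref{lem:binom}. Your introduction of the shorthand $A_l$ and the reindexing $t=b-l-j$ make the bookkeeping slightly cleaner than the paper's version (which substitutes $a=i-l$ and swaps on $j$ directly), but the argument is the same.
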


\begin{proof}
By induction over $l$ from $l=b-1$ down to $l=0$. We start with the induction base $l=b-1$. Our second constrain yields $V(b-1,kb)= 1$. It then follows from Lemma~\ref{lem:V} that
\begin{align*}
    V(b-1,kb) = \left(\frac{d}{d-1}\right)^{kb-(b-1)}\left(P_{b-1}+\frac{1}{b\cdot c}-1 \right) +1-\frac{1}{b\cdot c} \overset{\text{def.}}{=} 1 \, , 
\end{align*}
which immediately finishes the induction base, since the right-hand side of (\ref{equ:Pbound}) is simply $1/(b\cdot c)$.

We can now move on to the induction step $\forall i>l \leadsto l$. We use $V(l,kb)= 1$ and rearrange with the help of Lemma~\ref{lem:V}
\begin{equation}
    \label{equ:Ptemp}
    \begin{aligned}
    &\left(\frac{d}{d-1} \right)^{kb -l}\left(P_l+\frac{b-l}{b\cdot c}-1\right) = \frac{b-l}{b\cdot c} 
    \\ &\quad - \sum_{i=l+1}^{b-1}(-1)^{i-l} \frac{1}{(d-1)^{i-l}}\binom{kb-l}{i-l}\left(\frac{d}{d-1}\right)^{kb-i}\left(P_i+\frac{b-i}{b\cdot c}-1\right)\,.
    \end{aligned}
\end{equation}
It is now possible to apply the induction hypothesis. For clarity, we focus on the second line of (\ref{equ:Ptemp})
\begin{align*}
    & \sum_{i=l+1}^{b-1}(-1)^{i-l} \frac{1}{(d-1)^{i-l}}\binom{kb-l}{i-l} \frac{1}{b\cdot c}\left(\sum_{j=1}^{b-i} j \binom{kb-i}{b-i-j} \frac{1}{(d-1)^{b-i-j}} \right) \\
    & =\frac{1}{b\cdot c}\left( \sum_{a=1}^{b-l-1} (-1)^a \sum_{j=1}^{b-(l+a)} j \frac{1}{(d-1)^{b-l-j}} \binom{kb-l}{a}\binom{kb-(l+a)}{b-(l+a)-j} \right)\, ,
\end{align*}
where we substituted $a:=i-l$. Next, we carefully swap these nested sums. For this, observe that, for a fixed value $j$, we have exactly $b-l-j$ addends, more precisely, one addend for each $1\leq a \leq b-l-j$ 
\begin{align*}
    & \frac{1}{b\cdot c}\left( \sum_{a=1}^{b-l-1} (-1)^a \sum_{j=1}^{b-
(l+a)} j \frac{1}{(d-1)^{b-l-j}} \binom{kb-l}{a}\binom{kb-l-a}{b-l-a-j} \right) \\ 
    & =  \frac{1}{b\cdot c} \left(\sum_{j=1}^{b-l-1} j \frac{1}{(d-1)^{b-l-j}} \sum_{a=1}^{b-l-j} (-1)^a \binom{kb-l}{a}\binom{kb-l-a}{b-l-a-j} \right) \\ 
    & = -\frac{1}{b\cdot c}\left(\sum_{j=1}^{b-l-1} j \frac{1}{(d-1)^{b-l-j}}\binom{kb-l}{b-l-j}\right) \,,
\end{align*}
where we applied Lemma~\ref{lem:binom} with $k=j$, $n=b-l$ and $m=kb-l$ in the last step. 
Plugging this back into (\ref{equ:Ptemp}) gives 
\begin{align*}
    \left(\frac{d}{d-1} \right)^{kb -l}\left(P_l+\frac{b-l}{b\cdot c}-1\right)&=\frac{b-l}{b\cdot c} + \frac{1}{b\cdot c} \left(\sum_{j=1}^{b-l-1} j \frac{1}{(d-1)^{b-l-j}}\binom{kb-l}{b-l-j}\right) \\ 
    &= \frac{1}{b\cdot c}\left(\sum_{j=1}^{b-l} j \frac{1}{(d-1)^{b-l-j}}\binom{kb-l}{b-l-j}\right)
    \,. \qedhere
\end{align*}
\end{proof}

With the help of Lemma \ref{lem:P}, we can finally determine the resulting constant $c^*$. For $l=0$, we have $P_0=0$, and thus 
\[    
\left(\frac{d}{d-1} \right)^{kb}\left(\frac{1}{c^*}-1\right) = \frac{1}{b\cdot c^*}\left(\sum_{i=1}^{b} i \binom{kb}{b-i} \frac{1}{(d-1)^{b-i}} \right)\,,
\]
where solving for $c^*$ yields 
\[
    c^* =1-\frac{1}{b}\left(\sum_{i=1}^{b} i \binom{kb}{b-i}\frac{1}{(d-1)^{b-i}}\right)\left(1-\frac{1}{d} \right)^{kb}\,.
\]


\begin{theorem}
\textsc{WeightedAssignment} achieves a competitive ratio of $c^*$ for the
online $b$-matching problem with uniform server capacities on $(k,d)$-graphs. 
\end{theorem}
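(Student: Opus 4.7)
The plan is to assemble the machinery built up in this section into a standard primal-dual competitive analysis. First I would observe the two-sided bound $|\textsc{Alg}(G)| = P$, since every matched edge contributes exactly $1$ to the primal objective, and $|\textsc{Opt}(G)| \leq D$ by weak LP duality, provided the constructed dual solution is feasible. The theorem then reduces to establishing two facts along the algorithm's trajectory: (i) the final dual solution is feasible, and (ii) at every matching step $\Delta D \leq \Delta P / c^*$.

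For (i), I would verify the invariant $x(s) = V(l_s, \delta_s)$ at every point in time, which holds by induction using Implications~1 and~2 listed right after the definition of $V$ and the pseudocode. At termination, the full incidence of every server has been revealed, so $\delta_s \geq kb$, and the boundary condition $V(l,kb)=1$ enforced in Lemma~\ref{lem:P} forces $x(s)=1$ for all $s\in S$. Together with $y(r)=0\geq 0$ throughout, this yields $x(s)+y(r)\geq 1$ across every edge and hence dual feasibility.

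For (ii), I would start from the exact increment expression already derived: when $r$ is matched to $s$, $\Delta P=1$ and $\Delta D = b\bigl(p(l_s,\delta_s) + \sum_{s'\in N(r)\setminus\{s\}} q(l_{s'},\delta_{s'})\bigr)$. Since $|N(r)\setminus\{s\}|\leq d-1$ and $s$ maximizes $q(l_{\cdot},\delta_{\cdot})$ over $N(r)$ by Line~7, one can replace each term in the sum by $q(l_s,\delta_s)$ to obtain $\Delta D \leq b\bigl(p(l_s,\delta_s) + (d-1)\,q(l_s,\delta_s)\bigr)$. By construction, $q$ is set to saturate the relation $q(l,\delta) = \tfrac{1}{d-1}\bigl(1/(b c^*) - p(l,\delta)\bigr)$, which makes this upper bound equal exactly to $1/c^*$ on the relevant range $0\leq l<b$, $l\leq \delta<kb$, and trivially $0$ otherwise. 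The specific value of $c^*$ comes from specializing Lemma~\ref{lem:P} at $l=0$ with $P_0=0$, forcing $V(0,kb)=1$; solving for the constant reproduces the expression in the theorem. Combining $\Delta D \leq \Delta P/c^*$ over all matching steps yields $D \leq P/c^*$, and chaining with weak duality gives $|\textsc{Alg}(G)| \geq c^* \cdot |\textsc{Opt}(G)|$.

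The main obstacle I anticipate is justifying the step $\sum_{s'\in N(r)\setminus\{s\}} q(l_{s'},\delta_{s'}) \leq (d-1)\,q(l_s,\delta_s)$: the greedy choice delivers a termwise inequality, but to aggregate it into the stated bound one also needs $q\geq 0$ throughout the relevant range, since otherwise a sum with fewer than $d-1$ terms could exceed $(d-1)\,q(l_s,\delta_s)$. I would handle this by an induction on $l$ from $b$ down to $0$, arguing from the recurrence $V(l,\delta+1) = \tfrac{d}{d-1}V(l,\delta) + \tfrac{1}{d-1}(1/(bc^*) - V(l+1,\delta+1))$ that $V$ is non-decreasing in $\delta$, equivalently that $p(l,\delta) \leq 1/(bc^*)$ on the whole active range. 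This monotonicity is the only auxiliary claim not proven in the lead-up, and once it is in hand the theorem follows from the calculations already performed in Lemmas~\ref{lem:V} and~\ref{lem:P}.
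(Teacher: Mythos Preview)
Your proposal is correct and follows exactly the paper's approach: the paper establishes dual feasibility via the invariant $x(s)=V(l_s,\delta_s)$ and the boundary conditions $V(b,\cdot)=V(\cdot,kb)=1$, then bounds $\Delta D$ by $b\bigl(p(l_s,\delta_s)+(d-1)q(l_s,\delta_s)\bigr)=1/c^*$ using the greedy choice and $|N(r)|\leq d$, and extracts $c^*$ from Lemma~\ref{lem:P} at $l=0$. Your flagged obstacle---that replacing a sum of at most $d-1$ terms by $(d-1)q(l_s,\delta_s)$ tacitly requires $q\geq 0$---is a genuine subtlety that the paper glosses over with the phrase ``$|N(r)|\leq d$ implies,'' so your proposed monotonicity argument is a useful addition rather than a deviation.
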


\begin{figure}[bth!]
    \centering
\begin{tikzpicture}[scale=1]
\foreach \x in {0, 1,..., 8} {
    \node at (3*\x/2,-1) {\x};
}
\foreach \y in {0, 1,..., 4} {
    \node at (-1,3/2*\y) {\y};
}
\node at (-2,3) {$l$};
\node at (3/2*4,-2) {$\delta$};

\node (00) at (0,0) {0};
\node (10) at (3/2*1,0) {16};
\node (20) at (3/2*2,0) {37};
\node (30) at (3/2*3,0) {63};
\node (40) at (3/2*4,0) {93};
\node (50) at (3/2*5,0) {125};
\node (60) at (3/2*6,0) {157};
\node (70) at (3/2*7,0) {189};
\node (80) at (3/2*8,0) {221};

\node (11) at (3/2*1,3/2*1) {48};
\node (21) at (3/2*2,3/2*1) {59};
\node (31) at (3/2*3,3/2*1) {75};
\node (41) at (3/2*4,3/2*1) {97};
\node (51) at (3/2*5,3/2*1) {125};
\node (61) at (3/2*6,3/2*1) {157};
\node (71) at (3/2*7,3/2*1) {189};
\node (81) at (3/2*8,3/2*1) {221};

\node (22) at (3/2*2,3/2*2) {101};
\node (32) at (3/2*3,3/2*2) {107};
\node (42) at (3/2*4,3/2*2) {117};
\node (52) at (3/2*5,3/2*2) {133};
\node (62) at (3/2*6,3/2*2) {157};
\node (72) at (3/2*7,3/2*2) {189};
\node (82) at (3/2*8,3/2*2) {221};
    
\node (33) at (3/2*3,3/2*3) {159};
\node (43) at (3/2*4,3/2*3) {161};
\node (53) at (3/2*5,3/2*3) {165};
\node (63) at (3/2*6,3/2*3) {173};
\node (73) at (3/2*7,3/2*3) {189};
\node (83) at (3/2*8,3/2*3) {221};

\node (44) at (3/2*4,3/2*4) {221};
\node (54) at (3/2*5,3/2*4) {221};
\node (64) at (3/2*6,3/2*4) {221};
\node (74) at (3/2*7,3/2*4) {221};
\node (84) at (3/2*8,3/2*4) {221};

\draw[->] (00) to node[above] {\footnotesize 16} (10);
\draw[->] (10) to node[above] {\footnotesize 21} (20);
\draw[->] (20) to node[above] {\footnotesize 26} (30);
\draw[->] (30) to node[above] {\footnotesize 30} (40);
\draw[->] (40) to node[above] {\footnotesize 32} (50);
\draw[->] (50) to node[above] {\footnotesize 32} (60);
\draw[->] (60) to node[above] {\footnotesize 32} (70);
\draw[->] (70) to node[above] {\footnotesize 32} (80);

\draw[->] (00) to node[above left] {\footnotesize 48} (11);
\draw[->] (10) to node[above left] {\footnotesize 43} (21);
\draw[->] (20) to node[above left] {\footnotesize 38} (31);
\draw[->] (30) to node[above left] {\footnotesize 34} (41);
\draw[->] (40) to node[above left] {\footnotesize 32} (51);
\draw[->] (50) to node[above left] {\footnotesize 32} (61);
\draw[->] (60) to node[above left] {\footnotesize 32} (71);
\draw[->] (70) to node[above left] {\footnotesize 32} (81);

\draw[->] (11) to node[above] {\footnotesize 11} (21);
\draw[->] (21) to node[above] {\footnotesize 16} (31);
\draw[->] (31) to node[above] {\footnotesize 22} (41);
\draw[->] (41) to node[above] {\footnotesize 28} (51);
\draw[->] (51) to node[above] {\footnotesize 32} (61);
\draw[->] (61) to node[above] {\footnotesize 32} (71);
\draw[->] (71) to node[above] {\footnotesize 32} (81);

\draw[->] (11) to node[above left] {\footnotesize 53} (22);
\draw[->] (21) to node[above left] {\footnotesize 48} (32);
\draw[->] (31) to node[above left] {\footnotesize 42} (42);
\draw[->] (41) to node[above left] {\footnotesize 36} (52);
\draw[->] (51) to node[above left] {\footnotesize 32} (62);
\draw[->] (61) to node[above left] {\footnotesize 32} (72);
\draw[->] (71) to node[above left] {\footnotesize 32} (82);

\draw[->] (22) to node[above] {\footnotesize 6} (32);
\draw[->] (32) to node[above] {\footnotesize 10} (42);
\draw[->] (42) to node[above] {\footnotesize 16} (52);
\draw[->] (52) to node[above] {\footnotesize 24} (62);
\draw[->] (62) to node[above] {\footnotesize 32} (72);
\draw[->] (72) to node[above] {\footnotesize 32} (82);

\draw[->] (22) to node[above left] {\footnotesize 58} (33);
\draw[->] (32) to node[above left] {\footnotesize 54} (43);
\draw[->] (42) to node[above left] {\footnotesize 48} (53);
\draw[->] (52) to node[above left] {\footnotesize 40} (63);
\draw[->] (62) to node[above left] {\footnotesize 32} (73);
\draw[->] (72) to node[above left] {\footnotesize 32} (83);

\draw[->] (33) to node[above] {\footnotesize 2} (43);
\draw[->] (43) to node[above] {\footnotesize 4} (53);
\draw[->] (53) to node[above] {\footnotesize 8} (63);
\draw[->] (63) to node[above] {\footnotesize 16} (73);
\draw[->] (73) to node[above] {\footnotesize 32} (83);

\draw[->] (33) to node[above left] {\footnotesize 62} (44);
\draw[->] (43) to node[above left] {\footnotesize 60} (54);
\draw[->] (53) to node[above left] {\footnotesize 56} (64);
\draw[->] (63) to node[above left] {\footnotesize 48} (74);
\draw[->] (73) to node[above left] {\footnotesize 32} (84);

\end{tikzpicture}
    \caption{The function $V$ for $k=d=2$ and $b=4$. All values are multiplied by $221$.}
    \label{fig:Vexample}
\end{figure}
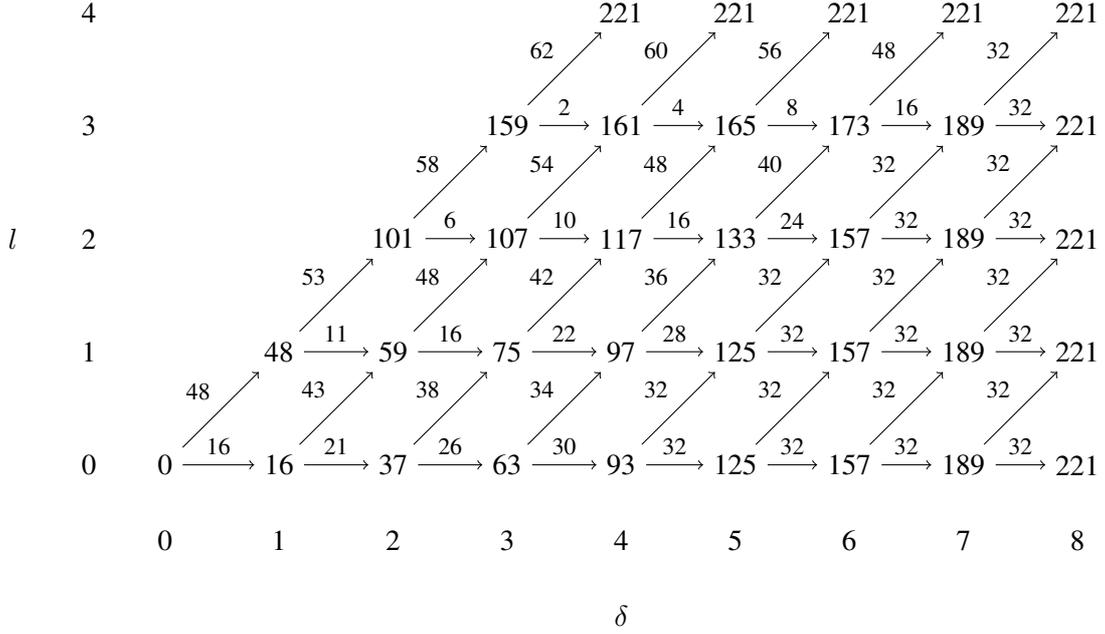

Lemma~\ref{lem:V}, together with Lemma~\ref{lem:P} and $c^*$, specifies the function $V$. Its values can be calculated in a preprocessing step and accessed
by table lookup when {\sc WeightedAssignment} serves requests. The parameters $k$ and $d$ must be known. In an application, they can be learned over
time. Alternatively, one can work with conservative estimates. 
Figure~\ref{fig:Vexample} shows the function $V$ for a small example with $k=d=2$ and $b=4$. In this case, we have 
$c^*=221/256$ and $1/(bc^*) = 64/221$. The arrows depict the possible increases in $V$ for every $l$, $0\leq l<b$, and $\delta$, $l\leq \delta<kb$, 
i.e.\ horizontal arrows denote the $q(l,\delta)$ values and diagonal arrows the $p(l,\delta)$ values. All actual values are multiplied by $221$ 
in order to eliminate fractions. Observe that $p(l,\delta)+(d-1)q(l,\delta)=1/(bc^*)$ holds for all $l$ and $\delta$. 
This is what allows us to bound the total increase in the dual solution by $1/c^*$, if we always pick the neighboring server $s$ that maximizes $q(l_s, \delta_s)$. 
For the matching decisions, {\sc WeightedAssignment} only uses the horizontal arrows. 
Notice that {\sc WeightedAssignment} is different from a \textsc{Balance} algorithm that breaks ties by \textsc{HighDegree},
or from a \textsc{HighDegree} algorithm that breaks ties by \textsc{Balance}. For example, {\sc WeightedAssignment} prefers a server with load $1$ and degree $5$ to a server with load $0$ and degree $1$, whom it then prefers to a server with load $3$ and degree $6$. 


\section{Upper bounds}\label{sec:ub}
We  will show that \textsc{WeightedAssignment} is optimal for $(k,d)$-graphs with $k\geq d$, i.e.\ no deterministic online algorithm can achieve a competitive ratio better than $c^*$. We start by proving this for the online $b$-matching problem with uniform server capacities, and later extend it to the more general problems. 

First, we show that any ($k,d$)-graph with uniform server capacities $b$ has a perfect $b$-matching, i.e. a matching where every server $s\in S$ is matched exactly $b$ times, if $k\geq d$. This generalizes Lemma~6.1 in \cite{NW}, which states this for $b=1$. 

\begin{lemma}
\label{lem:OPT}
Every ($k,d$)-graph $G=(S\cup R, E)$, where $k\geq d$, with uniform server capacities $b$ has a perfect $b$-matching.
\end{lemma}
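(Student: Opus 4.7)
The plan is to reduce the existence of a perfect $b$-matching to a Hall-type condition and then verify that condition by a double-counting argument that exploits the degree bounds.

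First I would pass from $G$ to an auxiliary bipartite graph $G'$ in which every server $s \in S$ is replaced by $b$ identical copies $s^{(1)},\dots,s^{(b)}$, each connected to the same neighbors in $R$ as $s$. A perfect $b$-matching of $G$ corresponds exactly to an ordinary (unit-capacity) perfect matching of $S'$ into $R$ in $G'$. By Hall's theorem, such a matching exists iff $|N_{G'}(T')| \geq |T'|$ for every $T' \subseteq S'$. Since copies of the same server have identical neighborhoods, it suffices to verify, for every $T \subseteq S$ in the original graph, the inequality
\[
   |N_G(T)| \;\geq\; b\,|T| .
\]

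Next I would establish this inequality by double counting the edges $E(T,N(T))$ between $T$ and its neighborhood. On the one hand, each server in $T$ contributes at least $k\cdot b$ edges by the $(k,d)$-graph property, so $|E(T,N(T))| \geq kb\,|T|$. On the other hand, each request in $N(T)$ has total degree at most $d$ in $G$, hence contributes at most $d$ edges, giving $|E(T,N(T))| \leq d\,|N(T)|$. Combining,
\[
   |N(T)| \;\geq\; \frac{kb}{d}\,|T| \;\geq\; b\,|T| ,
\]
where the last step uses the hypothesis $k \geq d$. This verifies Hall's condition for $G'$ and concludes the proof.

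There is no real obstacle: the only non-trivial ingredient is the observation that the degree lower bound is expressed in units of $b_s$, which is exactly what is needed for the Hall bound to scale correctly with $b$. The hypothesis $k\geq d$ enters only in the final step, and shows why $k\geq d$ is the right regime for perfect $b$-matchings to exist, matching what was known for $b=1$ in \cite{NW}.
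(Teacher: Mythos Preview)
Your proof is correct and follows essentially the same approach as the paper: both reduce to the unit-capacity case by splitting each server into $b$ copies. The paper then invokes Lemma~6.1 of~\cite{NW} on the resulting $(kb,db)$-graph, whereas you supply the Hall-plus-double-counting argument directly, making the proof self-contained.
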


\begin{proof}
We perform the known reduction of the $b$-matching problem to the classical maximum matching problem, i.e. we replace every server with capacity $b$ by $b$ unit servers having the same adjacent requests as the replaced server. This will result in a $(kb,db)$-graph without capacities, where $kb\geq db$. Lemma~6.1 in \cite{NW} then implies that there exists a perfect matching in this graph, which in turn implies the existence of a perfect $b$-matching in the original graph. 
\end{proof}

We move on to describing the adversary input. We start by following the construction of the previously known upper bound detailed in \cite{NW}. There are $N=d^{kb}$ servers, and the requests arrive in $kb$ rounds. Let $S_i$ denote the set of unmatched servers at the beginning of round $i$, $0\leq i< kb$. It will hold that every server in $S_i$ has a current degree of $i$ and that $|S_i|=N(1-1/d)^i$. Note that this number is always a multiple of $d$, by choice of $N$. During round $i$, $|S_i|/d$ requests are introduced, such that every request is adjacent to exactly $d$ distinct servers in $S_i$ and every server in $S_i$ gains one new neighboring request. If the online algorithm decides not to match an introduced request, we consider it matched to an arbitrary neighbor. This will only improve the performance of said algorithm. This means that a $(1-1/d)$ fraction of the servers in $S_i$ are still unmatched after round $i$, explaining the previously mentioned $S_i=N(1-1/d)^i$. Thus, there are $N\cdot (1-1/d)^{kb}$ servers with degree $kb$ and load $0$ after round $kb-1$, irrespective of what choices the algorithm makes. In a final round, further requests are introduced to all matched servers arbitrarily, such that we get a valid $(k,d)$-graph. The online algorithm will have matched at most $bN\cdot \left(1-(1-1/d)^{kb}\right)$ requests, while Lemma~\ref{lem:OPT} implies that an optimal offline algorithm matches $bN$ requests. This yields the previously known upper bound of $\left(1-(1-1/d)^{kb}\right)$.

However, it seems suboptimal to introduce requests arbitrarily after the initial $kb$ rounds. In fact, we will show that we can further limit the number of requests matched by the online algorithm if we introduce the requests more carefully. Note that all the servers that are matched during round $i$ of the previous input are similar in the sense that they all have degree $i$ and load $1$. We will apply the ideas above recursively to the sets of matched servers of all rounds. More precisely, let $T$ denote the set of matched server during some round. Say all servers in $T$ have load $l$, $0\leq l < b$, and degree $\delta$, $l\leq \delta< kb$. We then schedule $kb-\delta$ rounds for $T$ using the same construction as above. Let $T_j$ denote the set of servers that still have load $l$ at the beginning of round $j$, $0\leq j<kb-\delta$. It will now hold that every server in $T_j$ has a current degree of $\delta+j$ and that $|T_i|=|T|(1-1/d)^i$. We have to make sure that all the possible values of $|T_i|$ are multiples of $d$. This is done by increasing the initial number of servers $N$ adequately. After these $kb-\delta$ round, we have $|T|\cdot(1-1/d)^{kb-\delta}$ servers with degree $kb$ and load $l$. This process is repeated for all the sets of matched servers until we eventually obtain a valid $(k,d)$-graph, in which every server has degree $kb$. 

For this, we formally define a function $F$, where $F(x,l,\delta)$ denotes how many units of capacity we can force a deterministic online algorithm to leave empty when starting with $x$ servers that all have load $l$ and degree $\delta$. This allows us to upper bound the number of matched requests by $bN-F(N,0,0)$, yielding the following upper bound
\begin{equation}
\label{equ:cupper}
    c\leq \frac{bN-F(N,0,0)}{bN} = 1- \frac{F(N,0,0)}{bN} \,.
\end{equation}

We cannot create any empty spots on full servers, so we have $F(x,b,\delta)=0$, for all $b\leq \delta\leq kb$. Once a server has $kb$ adjacent requests, we have satisfied the $(k,d)$-graph property locally, so we do not need to introduce any more adjacent requests for this server. This is captured by $F(x,l,kb)=x\cdot(b-l)$, for all $0\leq l \leq b$. For all other combinations of $l$ and $\delta$, it is possible to define $F$ recursively. Recall that we introduce $kb-\delta$ rounds when starting with $x$ servers all with load $l$ and degree $\delta$. During each of these rounds, exactly a $1/d$ fraction of the servers that still had load $l$ at the start of the round are discarded. Moreover, every server gets exactly one new neighbor during each round until they are matched. This implies 
\begin{equation}
    \label{equ:Frecursive}
    F(x,l,\delta) = x\cdot\left(1-\frac{1}{d}\right)^{kb-\delta}(b-l) + \sum_{i=1}^{kb-\delta} F\left(x\cdot\frac{1}{d} \left(1-\frac{1}{d} \right)^{i-1},l+1,\delta+i  \right) \,, 
\end{equation}
for all $0\leq l<b$ and $l \leq \delta <kb$. The following lemma solves this recurrence, which we can then apply in (\ref{equ:cupper}) to obtain the theorem. 

\begin{lemma}
\label{lem:F}
For all $l$, $0\leq l\leq b$, and all $\delta$, $l\leq \delta \leq kb$, it holds that 
\begin{equation}
     \label{equ:Fsolved}
    F(x,l,\delta) = x \left(1-\frac{1}{d}\right)^{kb-\delta} \left(\sum_{i=1}^{b-l}i \binom{kb-\delta}{b-l-i} \frac{1}{(d-1)^{b-l-i}}\right)\,. 
\end{equation}
\end{lemma}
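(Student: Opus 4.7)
\bigskip

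\noindent\textbf{Proof plan.} The plan is to prove the formula by induction on $l$, descending from $l=b$ down to $l=0$, using the recurrence~(\ref{equ:Frecursive}) for the induction step and Fact~\ref{fac:binom} to collapse the resulting double sum.

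\medskip

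\noindent\emph{Base case $l=b$.} The sum in~(\ref{equ:Fsolved}) is empty and evaluates to $0$, matching the boundary condition $F(x,b,\delta)=0$ for all $b\leq\delta\leq kb$. I would also verify, as a sanity check, the other boundary $\delta=kb$: here $(1-1/d)^{0}=1$ and $\binom{0}{b-l-i}$ vanishes except at $i=b-l$, where the single surviving term yields $x(b-l)$, in agreement with $F(x,l,kb)=x(b-l)$.

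\medskip

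\noindent\emph{Inductive step $l+1\leadsto l$.} Assume the formula for $l+1$ at every $\delta'$ in the allowed range. Substitute into~(\ref{equ:Frecursive}). The $i$-th summand on the right-hand side carries the factor $\frac{1}{d}(1-\frac{1}{d})^{i-1}\cdot(1-\frac{1}{d})^{kb-\delta-i}=\frac{1}{d}(1-\frac{1}{d})^{kb-\delta-1}$, which is independent of $i$. I would pull this common factor out, interchange the two summations (outer over $i$ from $1$ to $kb-\delta$, inner over $j$ from $1$ to $b-l-1$), and thereby collect $\sum_{i=1}^{kb-\delta}\binom{kb-\delta-i}{b-l-1-j}$. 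Reindexing this inner sum by $a=kb-\delta-i$ turns it into $\sum_{a=0}^{kb-\delta-1}\binom{a}{b-l-1-j}$, which by Fact~\ref{fac:binom} equals $\binom{kb-\delta}{b-l-j}$.

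\medskip

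\noindent\emph{Final simplification.} After this collapse, the expression becomes
\[
x\Bigl(1-\tfrac{1}{d}\Bigr)^{kb-\delta}\!(b-l)\;+\;x\cdot\tfrac{1}{d}\Bigl(1-\tfrac{1}{d}\Bigr)^{kb-\delta-1}\sum_{j=1}^{b-l-1} j\,\binom{kb-\delta}{b-l-j}\frac{1}{(d-1)^{b-l-1-j}}.
\]
Using the identity $\frac{1}{d}(1-\tfrac1d)^{kb-\delta-1}=\frac{1}{d-1}(1-\tfrac1d)^{kb-\delta}$, I would factor $x(1-1/d)^{kb-\delta}$ outside, absorb the extra $\frac{1}{d-1}$ into the exponent of $(d-1)^{b-l-1-j}$ to produce $(d-1)^{b-l-j}$, and observe that the standalone term $(b-l)$ is precisely the $i=b-l$ summand (since $\binom{kb-\delta}{0}=1$). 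Merging it into the sum yields $\sum_{i=1}^{b-l} i\binom{kb-\delta}{b-l-i}(d-1)^{-(b-l-i)}$, which is~(\ref{equ:Fsolved}) at level $l$ and completes the induction.

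\medskip

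\noindent\emph{Main obstacle.} The principal difficulty is bookkeeping: recognising that the $i$-dependence in the pulled-out product is trivial, then swapping the order of summation cleanly and applying the hockey-stick identity with the correct reindexation. Once the inner sum collapses to a single binomial, the remaining algebra is mechanical, but a single off-by-one in the exponents of $(d-1)$ or in the upper limit of $j$ would spoil the merge of the stray $(b-l)$ term with the main sum, so I would track these indices carefully.
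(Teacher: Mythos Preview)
Your proposal is correct and follows essentially the same approach as the paper: descending induction on $l$, substitution of the inductive hypothesis into the recurrence~(\ref{equ:Frecursive}), extraction of the $i$-independent factor, swap of summation order, collapse of the inner sum via Fact~\ref{fac:binom}, and absorption of the leading $(b-l)$ term as the $j=b-l$ summand. The only difference is cosmetic: the paper writes $\frac{1}{d}(1-\frac{1}{d})^{kb-\delta-1}$ as $(1-\frac{1}{d})^{kb-\delta}\frac{1}{d-1}$ implicitly by algebraic cancellation rather than stating the identity explicitly, and it omits your extra sanity check at $\delta=kb$.
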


\begin{proof}
By induction over $l$, starting with $l=b$ and going down to $l=0$. The induction base is satisfied as we get the empty sum in (\ref{equ:Fsolved}), making the whole expression zero. In the induction step $l+1\leadsto l$, we can apply our induction hypothesis to $F$ in (\ref{equ:Frecursive}). 
\begin{align*}
    &\sum_{i=1}^{kb-\delta} F\left(x\cdot \frac{1}{d} \left(1-\frac{1}{d} \right)^{i-1},l+1,\delta+i  \right) \\ 
    &\overset{\text{IH}}{=} \sum_{i=1}^{kb-\delta} x\frac{1}{d} \left(1-\frac{1}{d} \right)^{i-1} \left(1-\frac{1}{d}\right)^{kb-(\delta+i)} \left(\sum_{j=1}^{b-(l+1)}j \binom{kb-\delta-i}{b-(l+1)-j} \frac{1}{(d-1)^{b-(l+1)-j}}\right) \\ 
    &=x\frac{1}{d}\left(1-\frac{1}{d}\right)^{kb-\delta-1}  \left(\sum_{j=1}^{b-(l+1)}j \frac{1}{(d-1)^{b-(l+1)-j}} \sum_{i=1}^{kb-\delta}\binom{kb-\delta-i}{b-(l+1)-j}\right) \\ 
    &= x\left(1-\frac{1}{d}\right)^{kb-\delta}\frac{1}{d-1} \left( \sum_{j=1}^{b-(l+1)}j \frac{1}{(d-1)^{b-l-j-1}} \sum_{i=0}^{kb-\delta-1}\binom{i}{b-l-j-1}\right)\\ 
    &= x\left(1-\frac{1}{d}\right)^{kb-\delta}\left( \sum_{j=1}^{b-(l+1)}j \frac{1}{(d-1)^{b-l-j}} \binom{kb-\delta}{b-l-j}\right) \,,
\end{align*}
where we used Fact~\ref{fac:binom} in the last step. Finally, we can plug this result back into (\ref{equ:Frecursive}) to obtain 
\begin{align*}
     F(x,l,\delta) &= x\cdot\left(1-\frac{1}{d}\right)^{kb-\delta}(b-l) + \sum_{i=1}^{kb-\delta} F\left(x\cdot\frac{1}{d} \left(1-\frac{1}{d} \right)^{i-1},l+1,\delta+i  \right) \\
      &= x\cdot\left(1-\frac{1}{d}\right)^{kb-\delta}\left((b-l) + \sum_{j=1}^{b-(l+1)}j \frac{1}{(d-1)^{b-l-j}} \binom{kb-\delta}{b-l-j}\right) \\
      &= x\cdot\left(1-\frac{1}{d}\right)^{kb-\delta}\left(\sum_{j=1}^{b-l}j \frac{1}{(d-1)^{b-l-j}} \binom{kb-\delta}{b-l-j}\right) \,. \qedhere
\end{align*}
\end{proof}

\begin{theorem}\label{th:ub}
No deterministic online algorithm for the $b$-matching problem with uniform server capacities $b$ can achieve a competitiveness better than $c^*$ on $(k,d)$-graphs with $k\geq d$.
\end{theorem}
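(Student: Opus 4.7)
The plan is to assemble the pieces developed above into a single adversarial argument. First, I fix the adversary as the recursive request-generation procedure already sketched, starting from a population of $N$ unmatched servers all of load $0$ and degree $0$, where $N$ is chosen as a sufficiently high power of $d$ so that every intermediate server population appearing in the nested recursion is an integer multiple of the appropriate power of $d$. After all rounds terminate, every server has degree exactly $kb$, and so the resulting graph $G$ is a valid $(k,d)$-graph with uniform capacity $b$. By Lemma \ref{lem:OPT}, $G$ admits a perfect $b$-matching, so $\textsc{Opt}(G) = bN$.

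Second, I argue by induction on the depth of the recursion, using exactly the case analysis encoded in the definition of $F$, that any deterministic online algorithm on this input leaves at least $F(N,0,0)$ units of server capacity unmatched. The base cases $F(x,b,\cdot)=0$ and $F(x,l,kb) = x(b-l)$ are immediate: we cannot force empty slots on already-full servers, and once a server has attained degree $kb$ we must stop introducing new neighbors to it. The recursive step mirrors the construction: after the $(kb-\delta)$ rounds applied to a set of $x$ servers of load $l$ and degree $\delta$, a $(1-1/d)^{kb-\delta}$ fraction of them remain at load $l$ and attain degree $kb$, contributing the first summand of (\ref{equ:Frecursive}), while the servers matched during round $i$ form a fresh stratum of size $x(1/d)(1-1/d)^{i-1}$ with new load $l+1$ and degree $\delta+i$, to which the inductive guarantee applies. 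Combined with (\ref{equ:cupper}), this yields the competitive-ratio bound
\[ c \leq 1 - \frac{F(N,0,0)}{bN}. \]

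Third, I substitute the closed form of Lemma \ref{lem:F} at $l=0,\delta=0$,
\[ F(N,0,0) = N\left(1-\frac{1}{d}\right)^{kb}\sum_{i=1}^{b} i \binom{kb}{b-i}\frac{1}{(d-1)^{b-i}}, \]
into the bound above. The factor $N$ cancels, and the remaining expression simplifies to
\[ 1 - \frac{1}{b}\left(\sum_{i=1}^{b} i \binom{kb}{b-i}\frac{1}{(d-1)^{b-i}}\right)\left(1-\frac{1}{d}\right)^{kb} = c^*, \]
which gives the desired conclusion $c \leq c^*$.

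The step I expect to require the most care is the combinatorial verification that the nested construction really does produce a well-formed $(k,d)$-graph. One has to check that each of the $(kb-\delta)$ rounds applied to a stratum of $x$ servers can be realized by introducing $x/d$ new requests, each adjacent to exactly $d$ distinct still-unfilled servers of that stratum and contributing exactly one new neighbor per server, and that the request sets used by different strata are disjoint so that degrees and capacities are not double-counted. Picking $N$ as a sufficiently high power of $d$ dispatches the divisibility requirements uniformly across all strata and all recursion depths, and the disjointness of strata handles the independence of request sets across recursive calls. Once this is verified, the reduction to the $F$ recursion is mechanical and the closed-form identity from Lemma \ref{lem:F} delivers the matching upper bound.
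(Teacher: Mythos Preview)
Your proposal is correct and follows essentially the same approach as the paper: the adaptive recursive adversary construction, the appeal to Lemma~\ref{lem:OPT} for $\textsc{Opt}(G)=bN$, the reduction to the recurrence for $F$, and the substitution of the closed form from Lemma~\ref{lem:F} into (\ref{equ:cupper}) are exactly the steps the paper uses. Your added discussion of the divisibility and disjointness checks for well-formedness of the $(k,d)$-graph makes explicit what the paper handles in a single sentence (``increasing the initial number of servers $N$ adequately''), but otherwise there is no substantive difference.
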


We extend this upper bound to the more general case with variable server capacities, which we will examine in the next section. Let $b_{min} = \min_{s\in S} b_s$.
The optimal competitiveness is then 
\[
    c^*_{\min} = 1-\frac{1}{b_{\min}}\left(\sum_{i=1}^{b_{\min}} i \binom{kb_{\min}}{b_{\min}-i}\frac{1}{(d-1)^{b_{\min}-i}}\right)\left(1-\frac{1}{d} \right)^{kb_{\min}} \,. 
\]
\begin{corollary}
\label{cor:varCap}
No deterministic online algorithm for the $b$-matching problem with variable server capacities can achieve a competitive ratio better than $c^*_{\min}$ on $(k,d)$-graphs with $k\geq d$.
\end{corollary}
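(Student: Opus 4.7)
The plan is to observe that the uniform-capacity setting with $b=b_{\min}$ is merely a special case of the variable-capacity setting, so the corollary follows from Theorem~\ref{th:ub} by a one-line reduction. Concretely, fix an arbitrary deterministic online algorithm \textsc{Alg} for the variable-capacity problem on $(k,d)$-graphs with $k\geq d$. I would feed \textsc{Alg} precisely the adversarial instance constructed in the proof of Theorem~\ref{th:ub}, instantiated with uniform capacity $b=b_{\min}$. Because that construction already produces a $(k,d)$-graph in which every server has capacity $b_{\min}$, it is a perfectly valid input to any variable-capacity algorithm; the hypothesis $k\geq d$ carries over verbatim, and Lemma~\ref{lem:OPT} still guarantees that the offline optimum matches every server $b_{\min}$ times.

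Next I would simply import the analysis used to prove Theorem~\ref{th:ub}. The inequality~(\ref{equ:cupper}) and the closed form of $F$ in Lemma~\ref{lem:F} do not depend on the algorithm beyond determinism, and they are stated for an arbitrary uniform capacity. Plugging $b=b_{\min}$ into~(\ref{equ:cupper}) and~(\ref{equ:Fsolved}) yields that the ratio of \textsc{Alg} to \textsc{Opt} on this instance is at most
\[
1-\frac{F(N,0,0)}{b_{\min} N}=1-\frac{1}{b_{\min}}\left(\sum_{i=1}^{b_{\min}} i\binom{k b_{\min}}{b_{\min}-i}\frac{1}{(d-1)^{b_{\min}-i}}\right)\left(1-\frac{1}{d}\right)^{k b_{\min}},
\]
which is exactly $c^*_{\min}$. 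Since \textsc{Alg} was arbitrary, no deterministic online algorithm for the variable-capacity problem can beat $c^*_{\min}$, which is the corollary.

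There is no real obstacle here: the reduction is essentially tautological because any algorithm that handles arbitrary per-server capacities must in particular handle the subcase where all capacities coincide with $b_{\min}$. The only things to check are that the adversary's $(k,d)$-graph construction respects $k\geq d$ (it does, by hypothesis), that the offline benchmark used in~(\ref{equ:cupper}) is attainable in the variable-capacity model (yes, via Lemma~\ref{lem:OPT} applied to the uniform sub-instance), and that $c^*_{\min}$ is indeed obtained from $c^*$ by the substitution $b\mapsto b_{\min}$, which is the definition given right before the corollary.
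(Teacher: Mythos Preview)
Your reduction is correct: an instance in which every server has capacity $b_{\min}$ is a perfectly valid input for the variable-capacity problem, so Theorem~\ref{th:ub} with $b=b_{\min}$ immediately yields the bound $c^*_{\min}$.

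The paper's own proof takes a somewhat different route. It fixes an arbitrary finite set $B$ of capacities and builds an adversarial instance that actually contains servers of \emph{every} capacity in $B$: it takes $N$ servers of the capacity $b'$ that minimizes $c^*$, plus a single server of each remaining capacity (padded with private neighbours so the $(k,d)$-constraints hold), and then lets $N\to\infty$ so that the padding contributes negligibly. This yields the bound $\min_{s\in S} c^*_s$, and the paper then invokes the monotonicity of $c^*$ in $b$ (Theorem~\ref{thm:mono}) to identify this minimum with $c^*_{\min}$. Your argument sidesteps both the padding construction and the appeal to monotonicity by going straight to $b=b_{\min}$; what the paper's approach buys in return is a slightly stronger statement, namely that the lower bound persists even when the adversary is obliged to exhibit the full capacity profile $B$ rather than a uniform one.
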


\begin{proof}
Note that we can upscale the number of initial servers $N$ and thus the number of matched requests in the adversary input for uniform server capacities. This means that for any set of server capacities $B$ (with $|B|<\infty$), we can create the adversary input with $N$ servers for the capacity $b'$ which minimizes the corresponding competitive ratio $c^*$. Moreover, we only create one server for every capacity in $B\setminus\{b'\}$ and add enough neighbors that are only adjacent to the respective server such that we obtain a valid $(k,d)$-graph. Even though this will increase the number of matched requests by $\sum_{b\in B\setminus\{b'\}} b$, this increase can be made null by increasing $N$ sufficiently. In the limit, any deterministic online algorithm achieves a competitive ratio of $\min_{s\in S}c^*_s$. In Section~\ref{sec:comp}, we show that $c^*$ is strictly increasing in $b$ if $k\geq d$, implying that $\min_{s\in S}c^*_s=c^*_{\min}$.
\end{proof}

Obviously, Theorem~\ref{th:ub} and Corollary~\ref{cor:varCap} also hold for the more general vertex-weighted $b$-matching problem, addressed in
the next section. 

\section{Variable server capacities and vertex weights}\label{sec:ext}
We detail the necessary changes to \textsc{WeightedAssignment} such that it can handle variable server capacities as well as vertex weights, while still achieving the optimal competitive ratio. 


\noindent {\bf Variable server capacities:} 
Recall that every server $s\in S$ now has a server capacity $b_s$, and thus a degree of at least $d(s)\geq k\cdot b_s$. This changes the objective function of the dual linear program to 
\[
    \sum_{s\in S} b_s \cdot x(s) + \sum_{r\in R} y(r) \,.
\]
We handle this by computing the function $V_s$ for every server individually, for its capacity $b_s$. This means that we construct $V_s$ according to Section~\ref{sec:alg}, such that  
\begin{equation}
\label{equ:Vvar}
    b_s\cdot \Big( V_s(l+1,\delta+1) - V_s(l,\delta) + (d-1)\cdot \big(V_s(l,\delta+1) - V_s(l,\delta)\big)\Big) \leq \frac{1}{c^*_s} \,, 
\end{equation}
where $c^*_s$ is equal to $c^*$, but $b$ is replaced by $b_s$. Moreover, we have $V_s(b_s,\cdot)=1$ and $V_s(\cdot, \delta')=1$ if $\delta'\geq kb_s$, meaning that we again construct a feasible dual solution. However, we still have to adapt the decision criterion of \textsc{WeightedAssignment}. We change Line~7 in Algorithm~\ref{alg:WA} to
\[
    \text{Match $r$ to } \arg \max\left\{b_s\cdot \big(V_s(l_s,\delta_s+1)-V_s(l_s,\delta_s)\big): s\in N(r)\right\}\,.
\]
This allows us to upper bound the total increase in the dual solution when the adapted strategy, called \textsc{WeightedAssignment(VC)}, assigns a request $r$ to a server $s$ by 
\begin{align*}
    \Delta D &= b_s \cdot \Big( V_s(l_s+1,\delta_{s}+1) - V_s(l_s,\delta_{s})\Big) + \sum_{s'\in N(r)\setminus\{s\}} b_{s'} \cdot \Big(V_{s'}(l_{s'},\delta_{s'}+1) - V_{s'}(l_{s'},\delta_{s'})\Big) \\ 
    &\leq  b_s\cdot \Big( V_s(l_s+1,\delta_{s}+1) - V_s(l_s,\delta_{s}) + (d-1)  \cdot \big(V_{s}(l_{s},\delta_{s}+1) - V_{s}(l_{s},\delta_{s})\big)\Big) \leq \frac{1}{c^*_s} \,.
\end{align*}
Thus, \textsc{WeightedAssignment(VC)} achieves a competitive ratio of $\min_{s\in S} c^*_s$. In Section~\ref{sec:comp} we will show that $c^*$ is monotonically increasing in $b$ for $k\geq d$, meaning that $\min_{s\in S} c^*_s=c^*_{\min}$, cf.\  Section~\ref{sec:ub}.

\begin{theorem}
\textsc{WeightedAssignment(VC)} achieves a competitive ratio of $\min_{s\in S} c^*_s$ the $b$-matching problem with variable server capacities on $(k,d)$-graphs. 
The ratio equals $c^*_{\min}$ and is optimal for $k\geq d$. 
\end{theorem}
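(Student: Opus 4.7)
The plan is to mirror the primal-dual analysis of Section~\ref{sec:alg}, but with a separately tailored value function $V_s$ for each server $s\in S$. The setup of the excerpt already hands us the per-server objects: for each $s$, we construct $V_s$ exactly as in Lemma~\ref{lem:V} and Lemma~\ref{lem:P} with $b$ replaced by $b_s$, so that $V_s(b_s,\cdot)=1$, $V_s(\cdot,\delta')=1$ for $\delta'\geq kb_s$, and inequality~(\ref{equ:Vvar}) holds with the constant $1/c^*_s$ on the right-hand side. I will use $V_s$ to drive the dual variable $x(s)$, setting $x(s)=V_s(l_s,\delta_s)$ throughout the run via the natural analogues of Lines~9 and~11 of Algorithm~\ref{alg:WA}.

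The first step is \emph{dual feasibility}. Since \textsc{WeightedAssignment(VC)} matches every incoming request to some neighbor whenever $N(r)\neq\emptyset$, the same argument as Implication~1 in Section~\ref{sec:alg} shows that $x(s)=V_s(l_s,\delta_s)$ is maintained invariantly for every server. At termination, each server $s$ has degree $\geq k b_s$ by the $(k,d)$-graph property, so the boundary condition $V_s(\cdot,\delta')=1$ for $\delta'\geq kb_s$ forces $x(s)=1$. Combined with $y(r)=0$ and the dual constraint $x(s)+y(r)\geq 1$ on every edge, the solution is dual feasible.

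Next, I will \emph{bound the dual increment per matched request}. When \textsc{WeightedAssignment(VC)} assigns $r$ to its chosen server $s$, the dual objective changes by
\[
\Delta D=b_s\bigl(V_s(l_s+1,\delta_s+1)-V_s(l_s,\delta_s)\bigr)+\sum_{s'\in N(r)\setminus\{s\}}b_{s'}\bigl(V_{s'}(l_{s'},\delta_{s'}+1)-V_{s'}(l_{s'},\delta_{s'})\bigr).
\]
The modified Line~7 picks $s$ to maximize $b_s\bigl(V_s(l_s,\delta_s+1)-V_s(l_s,\delta_s)\bigr)$, so for each $s'\in N(r)\setminus\{s\}$ the term $b_{s'}(V_{s'}(l_{s'},\delta_{s'}+1)-V_{s'}(l_{s'},\delta_{s'}))$ is upper bounded by $b_s(V_s(l_s,\delta_s+1)-V_s(l_s,\delta_s))$. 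Using $|N(r)\setminus\{s\}|\leq d-1$ and applying inequality~(\ref{equ:Vvar}) for $s$ yields $\Delta D\leq 1/c^*_s\leq 1/\min_{s\in S}c^*_s$, while $\Delta P=1$. Summing over all matching steps gives $D\leq P/\min_{s\in S}c^*_s$, and weak duality then yields $\textsc{Alg}=P\geq \bigl(\min_{s\in S}c^*_s\bigr)\cdot D\geq \bigl(\min_{s\in S}c^*_s\bigr)\cdot\textsc{Opt}$.

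Finally, I will \emph{identify the minimum and argue optimality}. Invoking the monotonicity of $b\mapsto c^*$ for $k\geq d$, which is established in Section~\ref{sec:comp}, gives $\min_{s\in S}c^*_s=c^*_{b_{\min}}=c^*_{\min}$. Matching optimality is immediate from Corollary~\ref{cor:varCap}, which already rules out any deterministic online algorithm better than $c^*_{\min}$ on $(k,d)$-graphs with $k\geq d$. The only step that requires any care is verifying that the argmax in the modified Line~7 still yields a pointwise bound on every non-chosen neighbor's increment; this is where introducing the weighting factor $b_s$ inside the argmax is essential, and it is the one substantive departure from the uniform-capacity analysis. Everything else is a direct transcription of the Section~\ref{sec:alg} argument on a server-by-server basis.
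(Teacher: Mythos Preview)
Your proposal is correct and follows essentially the same approach as the paper: per-server value functions $V_s$ with the boundary conditions yielding dual feasibility, the modified argmax weighted by $b_s$ to bound each neighbor's dual increment, inequality~(\ref{equ:Vvar}) to cap $\Delta D$ by $1/c^*_s$, and then the appeal to Section~\ref{sec:comp} and Corollary~\ref{cor:varCap} for the identification $\min_{s\in S}c^*_s=c^*_{\min}$ and optimality. There is no substantive difference.
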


\smallskip

\noindent {\bf Vertex weights:} 
At last, we consider the vertex-weighted extension of the online $b$-matching problem. Every server $s\in S$ now has a weight $w_s$ assigned to it, and the value of every matching edge incident to $s$ is multiplied by $w_s$. 
This problem is modelled by the following linear programs. 
\begin{align*}
\textbf{P: }\text{max} \ &\sum_{\{s,r\}\in E} w_s \cdot m(s,r) 
& \textbf{D: }\text{min} \ &\sum_{s\in S} w_s\cdot b_s \cdot x(s) + \sum_{r\in R} y(r) \\
\text{s.t.} \ &\sum_{r:\{s,r\}\in E} w_s\cdot m(s,r) \leq w_s\cdot b_s, \ (\forall s \in S)
&\text{s.t.} \ & w_s\cdot x(s)+y(r) \geq w_s, \ (\forall \{s,r\}\in E) \\
&\sum_{s:\{s,r\}\in E} m(s,r) \leq 1, \ (\forall r\in R)
& &x(s),\,y(r)\geq 0, \ (\forall s\in S, \forall r\in R) \\
& m(s,r) \geq 0, \ (\forall \{s,r\}\in E)
\end{align*}

Observe that $x(s)=1$, for all $s\in S$, by the end of the algorithm still implies dual feasibility. Hence, we do not need to change the construction of $V_s$, and still obtain a feasible dual solution. All we need to do is to change the decision criterion of \textsc{WeightedAssignment(VC)} once more to 
\[
    \text{Match $r$ to } \arg \max\left\{w_s \cdot b_s\cdot \big(V_s(l_s,\delta_s+1)-V_s(l_s,\delta_s)\big): s\in N(r)\right\}\,.
\]
Whenever the resulting algorithm \textsc{WeightedAssignment(VW)} assigns a request $r$ to a server $s$, we increase the primal solution by $w_s$, while we can upper bound the increase in the dual solution by 
\begin{align*}
    \Delta D &\leq  w_s \cdot b_s\cdot \Big( V_s(l_s+1,\delta_{s}+1) - V_s(l_s,\delta_{s}) + (d-1)  \cdot \big(V_{s}(l_{s},\delta_{s}+1) - V_{s}(l_{s},\delta_{s})\big)\Big) \leq \frac{w_s}{c^*_s}\,.
\end{align*}
This again yields a competitiveness of $\min_{s\in S} c^*_s$.  

\begin{theorem}
\textsc{WeightedAssignment(VW)} achieves a competitive ratio of $\min_{s\in S} c^*_s$ for the
vertex-weighted $b$-matching problem with variable server capacities on $(k,d)$-graphs. The ratio equals $c^*_{\min}$
and is optimal for $k\geq d$. 
\end{theorem}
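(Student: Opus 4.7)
The plan is to extend the primal-dual framework of Section~\ref{sec:alg}, already adapted to variable capacities earlier in this section, to the vertex-weighted setting, and then appeal to Corollary~\ref{cor:varCap} for the matching upper bound.

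First I would reuse the per-server potential functions $V_s$ built exactly as in the variable-capacity analysis, with \emph{no} change to their construction. The crucial observation is that the vertex-weighted dual constraint $w_s x(s) + y(r) \geq w_s$ is already satisfied whenever $x(s) \geq 1$, so the boundary conditions $V_s(b_s,\cdot) = 1$ and $V_s(\cdot, kb_s) = 1$ continue to deliver a feasible dual solution at termination. A short induction identical to the one sketched before Lemma~\ref{lem:V} shows that the invariant $x(s) = V_s(l_s,\delta_s)$ holds throughout the execution, and then the degree bound $d(s)\geq kb_s$ forces $x(s) = 1$ by the end.

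Next I would charge each matching step. When \textsc{WeightedAssignment(VW)} assigns request $r$ to server $s$, the primal increases by $\Delta P = w_s$ and the dual objective, which now carries the extra $w_s$ coefficients, changes by
\begin{align*}
    \Delta D = w_s b_s\bigl(V_s(l_s+1,\delta_s+1) - V_s(l_s,\delta_s)\bigr) + \sum_{s' \in N(r)\setminus\{s\}} w_{s'} b_{s'}\bigl(V_{s'}(l_{s'},\delta_{s'}+1) - V_{s'}(l_{s'},\delta_{s'})\bigr).
\end{align*}
The modified decision rule selects $s$ maximizing $w_s b_s \bigl(V_s(l_s,\delta_s+1)-V_s(l_s,\delta_s)\bigr)$, so each summand in the sum above is at most the corresponding quantity for $s$; combined with $|N(r)|\leq d$ and inequality~(\ref{equ:Vvar}) this yields $\Delta D \leq w_s / c^*_s \leq w_s / c^*_{\min} = \Delta P / c^*_{\min}$. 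Summing over all requests and invoking weak duality then gives $|\textsc{Opt}| \leq D \leq P / c^*_{\min} = |\textsc{Alg}| / c^*_{\min}$, the claimed competitive ratio.

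For optimality when $k \geq d$, I would simply observe that the unweighted $b$-matching problem with variable capacities is the special case $w_s \equiv 1$, so Corollary~\ref{cor:varCap} rules out any deterministic algorithm achieving more than $c^*_{\min}$. The identification $\min_{s\in S} c^*_s = c^*_{\min}$ uses the monotonicity of $c^*$ in $b$ for $k\geq d$, an ingredient that will be established in Section~\ref{sec:comp}. The main obstacle here is a bookkeeping one rather than a technical one: verifying that the new $w_s$ factors propagate symmetrically on both sides of the weighted maximization rule and factor cleanly out of the $w_s b_s$ coefficients of the dual objective, so that (\ref{equ:Vvar}) can be applied with the same normalization $1/c^*_s$ as in the unweighted variable-capacity case. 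All the analytical heavy lifting, including the solution of the recurrences and the construction of $V_s$, has already been done in Sections~\ref{sec:alg} and~\ref{sec:ext}.
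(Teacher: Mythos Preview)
Your proposal is correct and follows essentially the same approach as the paper: reuse the per-server potentials $V_s$ unchanged, note that $x(s)=1$ still certifies dual feasibility in the weighted LP, bound $\Delta D$ via the weighted decision rule and inequality~(\ref{equ:Vvar}) to get $\Delta D\le w_s/c^*_s$, and invoke Corollary~\ref{cor:varCap} plus the monotonicity from Section~\ref{sec:comp} for optimality. The paper's argument is exactly this, presented in the paragraph immediately preceding the theorem.
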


\section{Analysis of the competitive ratio}\label{sec:comp} 

\begin{theorem}
\label{thm:conv}
If $k\geq d\geq2$, the competitive ratio $c^*$ converges to one as $b$ tends to infinity, that is $\lim_{b\rightarrow\infty}c^*=1$.
\end{theorem}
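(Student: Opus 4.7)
My plan is to expose a probabilistic structure inside the closed form of $c^*$ and then finish with a one-line concentration estimate, rather than attack the sum analytically. Write $A_b := 1-c^*$; the claim is $A_b\to 0$. First I would re-index the defining sum via $j := b-i$, giving
\[ A_b \;=\; \frac{1}{b}\sum_{j=0}^{b-1}(b-j)\binom{kb}{j}\frac{1}{(d-1)^j}\left(1-\frac{1}{d}\right)^{kb}. \]
Pulling $(1-1/d)^{kb}=(d-1)^{kb}/d^{kb}$ into the summand and cancelling $(d-1)^{-j}$ rewrites the per-$j$ weight as $\binom{kb}{j}(1/d)^j(1-1/d)^{kb-j}$, which is precisely the point mass $P(Y=j)$ of a Binomial$(kb,1/d)$ variable $Y$. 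Therefore, with $(x)^+ := \max\{x,0\}$,
\[ A_b \;=\; \frac{1}{b}\,E\bigl[(b-Y)^+\bigr]. \]

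The hypothesis $k\geq d$ now pays off: $E[Y]=kb/d\geq b$, so $\{Y<b\}$ is an event in the lower tail of $Y$ and $(b-Y)^+\leq (E[Y]-Y)^+\leq |Y-E[Y]|$ pointwise. Combining this with Jensen's inequality and the binomial variance gives
\[ E\bigl[(b-Y)^+\bigr] \;\leq\; E\bigl[|Y-E[Y]|\bigr] \;\leq\; \sqrt{\mathrm{Var}(Y)} \;=\; \frac{1}{d}\sqrt{k(d-1)\,b}, \]
whence $A_b\leq \tfrac{1}{d}\sqrt{k(d-1)/b}\to 0$ as $b\to\infty$. Beyond proving $c^*\to 1$, this yields the quantitative rate $1-c^* = O(b^{-1/2})$ uniform in $k\geq d\geq 2$; a sharper Chernoff bound would even give $1-c^*=e^{-\Omega(b)}$ in the strict regime $k>d$, since then $b$ lies at a positive distance from $E[Y]/b=k/d$.

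The main obstacle is conceptual rather than technical: recognizing that the opaque closed form of $c^*$ encodes precisely $\tfrac{1}{b}E[(b-Y)^+]$ for a Binomial tail. Once this identification is made, every subsequent step is immediate and the role of the hypothesis $k\geq d$ becomes transparent (it is exactly what places $b$ at or below the mean of $Y$). A purely analytic route would instead have to bound truncated binomial tails directly, for instance through Gauss hypergeometric representations evaluated near the unit argument --- consistent with the paper's own remark that the asymptotic analysis relies on hypergeometric machinery --- and I expect that route to be substantially more painful than the probabilistic shortcut sketched above.
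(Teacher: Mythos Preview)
Your proof is correct and considerably more elegant than the paper's. The re-indexing $j=b-i$ turns the $j$-th summand into $\binom{kb}{j}(1/d)^j(1-1/d)^{kb-j}$ exactly as you claim, so the identification $A_b=\tfrac1b\,E[(b-Y)^+]$ with $Y\sim\mathrm{Bin}(kb,1/d)$ is clean; the pointwise bound $(b-Y)^+\le|Y-E[Y]|$ uses only $b\le E[Y]=kb/d$, i.e.\ precisely the hypothesis $k\ge d$; and the Jensen/variance step is immediate.

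By contrast, the paper never spots the binomial-tail structure. It first proves the pointwise inequality
\[
\frac1b\sum_{i=1}^b i\binom{kb}{b-i}\frac{1}{(d-1)^{b-i}}\;\le\;\binom{kb}{b-1}\frac{1}{(d-1)^{b-1}}
\]
by rewriting the left side as a Gauss hypergeometric polynomial ${}_2F_1(2,1-b;(k-1)b+2;1-d)$ and applying a contiguous relation to evaluate it, and then pushes the remaining single term to zero via Stirling's approximation together with an auxiliary calculus lemma comparing $k^k/(k-1)^{k-1}$ to $d^k/(d-1)^{k-1}$. Your route replaces all of this with two lines of probability; it also yields the explicit rate $1-c^*=O(b^{-1/2})$ (and $e^{-\Omega(b)}$ when $k>d$), which the paper's argument does not state. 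The only thing the paper's approach buys is that its intermediate hypergeometric identity is reused later in the monotonicity proof (Theorem~\ref{thm:mono}); for Theorem~\ref{thm:conv} in isolation your argument is strictly preferable.
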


\begin{proof}
We show that 
\[
    \lim_{b\rightarrow \infty} \frac{1}{b}\left(\sum_{i=1}^b i \binom{kb}{b-i}\frac{1}{(d-1)^{b-i}}\right)\left(1-\frac{1}{d} \right)^{kb} = 0 \,, 
\]
for all $k,d\in \mathbb{N}$ with $k\geq d\geq 2$. As a first step, we prove
\begin{equation}
    \label{equ:step1}
    \frac{1}{b}\left(\sum_{i=1}^b i \binom{kb}{b-i}\frac{1}{(d-1)^{b-i}}\right) \leq \binom{kb}{b-1} \frac{1}{(d-1)^{b-1}} \, .
\end{equation}
For this, observe that 
\[
    \binom{n}{k} = \binom{n}{k-i} \prod_{j=1}^i \frac{n-k+j}{k+1-j} \,,
\]
which means
\begin{align}
\label{equ:extendedBinom}
    \sum_{i=1}^b i \binom{kb}{b-i}\frac{1}{(d-1)^{b-i}} 
    &= \binom{kb}{b-1}\frac{1}{(d-1)^{b-1}}\sum_{i=1}^b i  (d-1)^{i-1} \prod_{j=1}^{i-1} \frac{b-j}{kb-(b-1)+j} \,.
\end{align}
In order to accurately bound the sum above, we rely on the theory of \emph{hypergeometric functions}. 
\begin{definition}
A Gauss hypergeometric function ${}_2F_1(\alpha,\beta;\gamma;z)$ can be defined as 
\[
    {}_2F_1(\alpha,\beta;\gamma;z)=1+\frac{\alpha\cdot \beta}{\gamma \cdot 1} z + \frac{\alpha(\alpha+1)\cdot \beta(\beta+1)}{\gamma(\gamma+1) \cdot 1\cdot 2} z^2 + \ldots = \sum_{n=0}^\infty \frac{(\alpha)_n (\beta)_n}{(\gamma)_n}\frac{z^n}{n!} \, ,
\]
where $\alpha,\beta,\gamma$ are parameters and $z$ is the variable. Here, $(x)_n$ denotes the Pochhammer symbol, which is defined as $(x)_n = \prod_{i=0}^{n-1} (x+i)$.
\end{definition}

Two hypergeometric functions are called \emph{contiguous} if they have the same variable and two equal parameters, while the third parameter differs by an integer. It is known \cite[Chapter~2.5]{andrews_askey_roy_1999} that for any three contiguous Gauss hypergeometric functions, there is a linear relation with rational coefficients depending on the parameters and the variable. We use the following relation 
\begin{equation}
    \label{equ:hyper}
    \alpha(1-z) {}{}_2F_1(\alpha+1,\beta;\gamma;z)+(\gamma -2\alpha - (\beta-\alpha)z) {}_2F_1(\alpha,\beta;\gamma;z) - (\gamma - \alpha) {}_2F_1(\alpha-1,\beta;\gamma;z)=0 \,.
\end{equation}

We upper bound the sum in (\ref{equ:extendedBinom}) by a hypergeometric function and then use (\ref{equ:hyper}) to determine its value. For this, observe that if one of the parameters $\alpha$ or $\beta$ of a hypergeometric function is a negative integer, it simplifies to a polynomial. For example, let $\beta=-m$ where $m\in \mathbb{N}$. Then, we get
\[
    {}_2F_1(\alpha,-m,\gamma,z) = \sum_{n=0}^m (-1)^n \frac{m!}{(m-n)!} \frac{(\alpha)_n}{(\gamma)_n}\frac{z^n}{n!} = \sum_{n=0}^m (-1)^n \binom{m}{n} \frac{(\alpha)_n}{(\gamma)_n} z^n \,.
\]
Furthermore, we have 
\begin{align*}
    \sum_{i=1}^b i  (d-1)^{i-1} \prod_{j=1}^{i-1} \frac{b-j}{kb-(b-1)+j} &\leq \sum_{i=1}^b i  (d-1)^{i-1} \prod_{j=1}^{i-1} \frac{b-j}{db-(b-1)+j} \\ 
    &= \sum_{i=0}^{b-1} (i+1)  (d-1)^{i} \prod_{j=0}^{i-1} \frac{b-1-j}{(d-1)b+2+j} \\
    &= \sum_{i=0}^{b-1} \frac{(i+1)!}{i!} (-1)^i (1-d)^{i} \frac{(b-1)!}{(b-1-i)!} \frac{1}{\big((d-1)b+2 \big)_i} \\
    &= \sum_{i=0}^{b-1} (-1)^i \binom{b-1}{i} \frac{(2)_i}{\big((d-1)b+2 \big)_i} (1-d)^i \\
    &= {}_2F_1(2,1-b;(d-1)b+2;1-d) \,.
\end{align*}
Using (\ref{equ:hyper}) with $\alpha=1$, $\beta=1-b$, $\gamma=(d-1)b+2$ and $z=1-d$ yields the coefficients $\alpha(1-z)=d$, $\gamma -2\alpha - (\beta-\alpha)z = 0$ and $\gamma - \alpha= db - (b-1)$.
Moreover, it holds that $F(0,\beta;\gamma;z)=1$. Hence, we get 
\[
    {}_2F_1(2,1-b;(d-1)b+2;1-d) = b - \frac{b-1}{d} \leq b \,,
\]
implying (\ref{equ:step1}). All that is left to show now is 
\[
    \lim_{b\rightarrow \infty}\binom{kb}{b-1} \frac{1}{(d-1)^{b-1}}\left(1-\frac{1}{d} \right)^{kb} = 0 \,.
\]

Here, we will use the Stirling approximation for $\binom{kb}{b-1}$,
which yields
\begin{align*} 
    \binom{kb}{b-1} \frac{1}{(d-1)^{b-1}}\left(1-\frac{1}{d} \right)^{kb} \sim \, &\sqrt{\frac{kb}{2\pi (b-1)(kb-b+1)}} \\ &\cdot \frac{(kb)^{kb}}{(b-1)^{b-1} (kb-b+1)^{kb-b+1}} \cdot \frac{(d-1)^{kb-b+1}}{d^{kb}}\,.
\end{align*}
First, notice that the root converges to $0$ if $b\rightarrow\infty$,
which means that we only have to upper bound the remaining terms by a constant in order to finish the proof. We have 
\begin{align*}
    \frac{(kb)^{kb}}{(b-1)^{b-1} (kb-b+1)^{kb-b+1}} \cdot \frac{(d-1)^{kb-b+1}}{d^{kb}} &\leq \frac{1}{(b-1)^{b-1}}  \cdot \frac{(kb)^{kb}}{d^{kb}} \cdot  \frac{(d-1)^{kb-b+1}}{((k-1)b)^{kb-b+1}} \\ 
    &= \left(\frac{b}{b-1} \right)^{b-1}  \left(\frac{k}{d}\right)^{kb}\left(\frac{d-1}{k-1} \right)^{kb-b+1} \,.
\end{align*}
Note that the first term is upper bounded by $e$. Moreover, with the help of the following lemma, we can upper bound the remaining terms by 1. 
\begin{lemma}
\label{lem:conv}
For all $k\geq 2$ and $d\geq 2$, it holds that 
\[
\frac{k^k}{(k-1)^{k-1}} \leq \frac{d^k}{(d-1)^{k-1}} \,.
\]
\end{lemma}
\begin{proof}
Consider the function $f(x)={x^k}/{(x-1)^{k-1}}$ for $x>1$. We show that $x=k$ is a global minimum. It holds
\[
    f'(x) = \frac{x^{k-1}(x-k)}{(x-1)^{k}}\,.
\]
For $x>1$, we only have $f'(x)= 0$ if $x=k$. Furthermore, it holds 
\[
    f''(x) = \frac{(k-1)kx^{k-2}}{(x-1)^{k+1}}\,.
\]
It is easy to see that $f''(x)>0$ for all $k\geq 2$ and $x>1$. Moreover, since $\lim_{x\rightarrow 1+} f(x) = \infty$ and $\lim_{x\rightarrow \infty} f(x) = \infty$, we indeed have a global minimum at $x=k$. This finishes the proof. 
\end{proof}

Lemma~\ref{lem:conv} implies 
\[
\left(\frac{k}{d} \right)^{k} \leq \left(\frac{k-1}{d-1} \right)^{k-1}\,,
\]
and since $x^b$ is strictly increasing in $x$ for $x>0$ and $b>0$, we have 
\[ 
    \left(\frac{k}{d}\right)^{kb}\left(\frac{d-1}{k-1} \right)^{kb-b+1} \leq \left(\frac{k-1}{d-1} \right)^{(k-1)b}\left(\frac{d-1}{k-1} \right)^{(k-1)b} \cdot \frac{d-1}{k-1} = \frac{d-1}{k-1} \leq 1\,,
\]
as $k\geq d\geq 2$. 
\end{proof}

\begin{theorem}
\label{thm:mono}
If $k\geq d\geq2$, the competitive ratio $c^*$ is strictly increasing in $b$, for $b\geq 1$. 
\end{theorem}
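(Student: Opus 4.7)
The plan is to re-express $c^*$ as the expectation of a concave function of a binomial sample mean, and then to invoke a Jensen-type monotonicity argument.

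\smallskip

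\textit{Step 1 (probabilistic form).} Substituting $j = b - i$ in the sum defining $c^*$ and using the identity $(1-1/d)^{kb}/(d-1)^j = (1/d)^j (1-1/d)^{kb-j}$, the sum collapses to $E[(b - X_b)^+]$, where $X_b \sim \mathrm{Bin}(kb, 1/d)$. Hence
\[c^*(b) \;=\; 1 - \tfrac{1}{b}\, E\bigl[(b - X_b)^+\bigr] \;=\; E\bigl[\min(X_b/b,\,1)\bigr].\]

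\textit{Step 2 (weak monotonicity).} Write $X_b = \sum_{i=1}^{kb} Y_i$ for i.i.d.\ $\mathrm{Bernoulli}(1/d)$ variables $Y_i$, set $\phi(y) := \min(ky, 1)$ (concave, piecewise linear with a kink at $y = 1/k$), and $\bar Y_m := m^{-1}\sum_{i=1}^m Y_i$. Then $c^*(b) = E[\phi(\bar Y_{kb})] =: F(kb)$. The leave-one-out identity $\bar Y_{m+1} = \tfrac{1}{m+1}\sum_{j=1}^{m+1} \bar Y_{m+1}^{(j)}$, with $\bar Y_{m+1}^{(j)} := m^{-1}\sum_{i\neq j} Y_i$, combined with concavity of $\phi$ and exchangeability of the $Y_i$'s, gives $F(m+1) \geq F(m)$; in particular $c^*(b+1) \geq c^*(b)$.

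\textit{Step 3 (strict inequality).} To upgrade to $c^*(b+1) > c^*(b)$ it suffices to exhibit a single index $m \in \{kb, \ldots, k(b+1) - 1\}$ where Jensen's inequality is strict with positive probability. Since $\phi$ is strictly concave only at $y = 1/k$, and the leave-one-out averages take only the two values $(X_{m+1} - 1)/m$ and $X_{m+1}/m$, strict Jensen requires some integer $X_{m+1}$ to lie in the open interval $(m/k,\, m/k + 1)$. At $m = kb$ this interval is $(b, b+1)$ and contains no integer, but at $m = kb + 1$ the interval $(b + 1/k,\, b + 1 + 1/k)$ contains the integer $b+1$ because $k \geq 2$. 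The event $\{X_{kb+2} = b+1\}$ has positive binomial probability and forces both $Y_j = 0$ and $Y_j = 1$ to occur (since $0 < b+1 < kb+2$), so $F(kb+2) > F(kb+1)$; chaining with Step 2 yields the strict inequality.

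\textit{Main obstacle.} The subtle point is that the very first step $m = kb \to kb+1$ does not suffice: since $kb/k = b$ is an integer, the two possible leave-one-out averages cannot straddle the kink. The argument must therefore be routed through a later step, which is precisely why the hypothesis $k \geq 2$ (guaranteed by $k \geq d \geq 2$) is essential; it guarantees that at least one straddling index exists between $kb$ and $k(b+1)$.
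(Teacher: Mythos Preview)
Your argument is correct and takes a genuinely different route from the paper's own proof. The paper attacks the inequality $c^*(b+1)>c^*(b)$ directly by writing the defect $1-c^*(b)$ as a product involving a Gauss hypergeometric value ${}_2F_1(2,1-b;(k-1)b+2;1-d)$ and a binomial coefficient, and then bounds the ratio of these quantities at consecutive values of $b$ via a contiguous relation for ${}_2F_1$, a Stirling-type estimate on the binomial ratio, and an auxiliary calculus lemma comparing $k^k/(k-1)^{k-1}$ with $d^k/(d-1)^{k-1}$. Your approach, by contrast, is probabilistic and essentially self-contained once the identity $c^*(b)=E[\min(X_b/b,1)]$ with $X_b\sim\mathrm{Bin}(kb,1/d)$ is established: the leave-one-out Jensen argument immediately gives weak monotonicity of $m\mapsto E[\phi(\bar Y_m)]$ for the concave $\phi(y)=\min(ky,1)$, and your Step~3 isolates the only delicate point --- that the first increment $kb\to kb+1$ is flat because $kb/k$ is an integer, so strictness must be harvested at the step $kb+1\to kb+2$, which works precisely when $k\ge 2$. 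This is cleaner than the paper's analysis (no special functions, no Stirling), and it in fact yields a slightly stronger conclusion: your proof needs only $k\ge 2$ and $d\ge 2$, whereas the paper's bound on the ratio of hypergeometric values genuinely relies on $k\ge d$.
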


\begin{proof}
We prove that $c^*$ is strictly increasing in $b$ by showing
\begin{align*}
    c'_b:= &\frac{1}{b}\left(\sum_{i=1}^b i \binom{kb}{b-i}\frac{1}{(d-1)^{b-i}}\right)\left(1-\frac{1}{d} \right)^{kb} > \\ 
    &\frac{1}{b+1}\left(\sum_{i=1}^{b+1} i \binom{k(b+1)}{b+1-i}\frac{1}{(d-1)^{b+1-i}}\right)\left(1-\frac{1}{d} \right)^{k(b+1)} =: c'_{b+1}\, ,
\end{align*}
for all $b\geq 1$, if $k\geq d\geq 2$. We do this by showing that $c'_{b+1}/c'_b< 1$. As detailed in the proof of Theorem~\ref{thm:conv}, it holds that 
\[
    c'_b= \frac{{}_2F_1(2,1-b;(k-1)b+2;1-d)}{b} \binom{kb}{b-1}\frac{1}{(d-1)^{b-1}}\left(1-\frac{1}{d} \right)^{kb}\, .
\]
It follows that 
\[
    \frac{c'_{b+1}}{c'_b} = \frac{b \cdot {}_2F_1(2,-b;(k-1)(b+1)+2;1-d)}{(b+1)\cdot{}_2F_1(2,1-b;(k-1)b+2;1-d)} \cdot  \frac{\binom{k(b+1)}{b}}{\binom{kb}{b-1}} \cdot \frac{(d-1)^{k-1}}{d^{k}} \,.
\]

\begin{lemma}
\label{lem:fracHyper}
For all $b\geq 1$ and $k\geq d\geq 2$, it holds that 
\[
\frac{b \cdot {}_2F_1(2,-b;(k-1)(b+1)+2;1-d)}{(b+1)\cdot{}_2F_1(2,1-b;(k-1)b+2;1-d)} < 1 \,.
\]
\end{lemma}

\begin{proof}
In the following, we fix $k\geq 2$ and $b\geq 1$. Then, we show that 
\[
\frac{{}_2F_1(2,-b;(k-1)(b+1)+2;1-d)}{{}_2F_1(2,1-b;(k-1)b+2;1-d)}  
\]
is maximized if $d$ is maximized, i.e. $d=k$. As shown in the proof of Theorem~\ref{thm:conv}, we will then have for $d=k$
\[
\frac{\frac{1}{b+1}{}_2F_1(2,-b;(k-1)(b+1)+2;1-k)}{\frac{1}{b}{}_2F_1(2,1-b;(k-1)b+2;1-k)} = \frac{ \frac{1}{b+1} \left(b+1-\frac{b}{k}\right)}{\frac{1}{b}\left(b-\frac{b-1}{k}\right)} = \frac{1-\frac{1}{k}\frac{b}{b+1}}{1-\frac{1}{k}\frac{b-1}{b}} < 1 \,,
\]
since $b/(b+1) > (b-1)/b$ for all $b\geq 1$. Recall, that 
\begin{align*}
    {}_2F_1(2,-b;(k-1)(b+1)+2;1-d) &= \sum_{i=0}^b (i+1)(d-1)^i \frac{b!}{(b-i)!}\frac{1}{\left((k-1)(b+1)+2 \right)_i} \\ 
    &=: \sum_{i=0}^b u_i\cdot (d-1)^i \,,
\end{align*}
  and 
\begin{align*}
    {}_2F_1(2,1-b;(k-1)b+2;1-d) &= \sum_{i=0}^{b-1} (i+1)(d-1)^i \frac{(b-1)!}{(b-1-i)!}\frac{1}{\left((k-1)b+2 \right)_i} \\
    &=: \sum_{i=0}^{b-1} v_i\cdot (d-1)^i\,,
\end{align*}
where $u_i$ and $v_i$ are constants, since we fixed $k$ and $b$. This means, that the fraction of the two hypergeometric functions is a quotient of two real polynomials. Thus, we can use the following theorem given in \cite{HVV}.

\begin{theorem}[{{\cite[Theorem 4.4]{HVV}}}]
Let $f_n(x)=\sum_{k=0}^n a_k x^k$ and $g_n(x)=\sum_{k=0}^n b_k x^k$ be real polynomials, with $b_k>0$ for all $k$. If the sequence $\{a_k/b_k\}$ is increasing (decreasing), then so is the function $f_n(x)/g_n(x)$ for all $x>0$. 
\end{theorem}

First, it is easy to see that $u_i>0$, for all $0\leq i \leq b$, and $v_i>0$, for all $0\leq i \leq b-1$, hold. Thus, we can apply the theorem above to the fraction 
\[
    \frac{f(d)}{g(d)}:= \frac{\sum_{i=0}^{b-1} u_i (d-1)^i}{\sum_{i=0}^{b-1} v_i (d-1)^i}\,.
\]
We will show that $(u_i/v_i)_{0\leq i\leq b-1}$ forms an increasing sequence, implying that $f/g$ is increasing for all $d>1$. Moreover, we have that
\[
    \frac{u_b(d-1)^b}{\sum_{i=0}^{b-1} v_i (d-1)^i}
\]
is increasing for $d> 1$, because the reciprocal
\[
    \sum_{i=0}^{b-1} \frac{v_i}{u_b}\cdot \frac{1}{(d-1)^{b-i}}
\]
is decreasing as all the coefficients are greater than zero. It then immediately follows that the considered fraction of hypergeometric functions is increasing for $d>1$. 

Hence, all that is left to prove is that $\left(u_i/v_i \right)_{0\leq i\leq b-1}$ is indeed an increasing sequence, i.e. $u_i/v_i > u_{i-1}/v_{i-1}$,
for all $1\leq i \leq b-1$. We have
\begin{align*}
    \frac{u_i}{v_i} = \frac{b}{b-i} \frac{((k-1)b+2)_i}{((k-1)(b+1)+2)_i} \quad \text{and} \quad
    \frac{u_{i-1}}{v_{i-1}} = \frac{b}{b-i+1} \frac{((k-1)b+2)_{i-1}}{((k-1)(b+1)+2)_{i-1}} \,. 
\end{align*} 
Observe that 
\[
    \frac{u_i}{v_i} = \frac{u_{i-1}}{v_{i-1}} \cdot \frac{b-i+1}{b-i} \frac{(k-1)b+2+i-1}{(k-1)(b+1)+2+i-1}\,,
\]
meaning that we only have to show 
\begin{equation*}
    \frac{(k-1)b+2+i-1}{(k-1)(b+1)+2+i-1} > \frac{b-i}{b-i+1} \,.
\end{equation*}
The left fraction is increasing in $i$, and thus lower bounded by 
\[
    \frac{(k-1)b+1}{(k-1)(b+1)+1} > \frac{b}{b+1} \,.
\]
On the other hand, $(b-i)/(b-i+1)$ is decreasing in $i$, and thus upper bounded by $b/(b+1)$. This finishes the proof. 
\end{proof}

\begin{lemma}
\label{lem:fracBinom}
For all $k\geq 2$ and all $b\geq 1$, it holds that 
\[
\frac{\binom{k(b+1)}{b}}{\binom{kb}{b-1}} \leq \frac{k^k}{(k-1)^{k-1}} \,.
\]
\end{lemma}

\begin{proof}
We prove this by considering the fraction $\binom{k(b+1)}{b+1}/\binom{kb}{b}$, which is closely related to the desired fraction of binomial coefficients in the sense that
\begin{equation}
\label{equ:binom}
    \frac{\binom{k(b+1)}{b}}{\binom{kb}{b-1}} = \frac{b+1}{b}\frac{(k-1)b+1}{(k-1)(b+1)+1} \frac{\binom{k(b+1)}{b+1}}{\binom{kb}{b}} \,.
\end{equation}
We upper bound said fraction of binomial coefficients by applying upper and lower bounds for $n!$. The Stirling approximation yields 
\[
    \sqrt{2 \pi } n^{n+\frac{1}{2}} e^{-n} e^{\frac{1}{12n+1}} < n! < \sqrt{2 \pi } n^{n+\frac{1}{2}} e^{-n} e^{\frac{1}{12n}} \,.
\]
We can rewrite the fraction of binomial coefficients as
\[
    \frac{\binom{k(b+1)}{b+1}}{\binom{kb}{b}} = \frac{1}{b+1} \frac{\big(k(b+1) \big)!}{\big((k-1)(b+1) \big)!}\frac{\big( (k-1)b\big)!}{(kb)!} \,.
\]
Now, we apply the upper bound in the numerator and the lower bound in the denominator 
\begin{align*}
    \frac{\binom{k(b+1)}{b+1}}{\binom{kb}{b}} < \frac{1}{b+1}& \frac{\big(k(b+1) \big)^{k(b+1)+\frac{1}{2}}e^{-k(b+1)}}{\big( (k-1)(b+1)\big)^{(k-1)(b+1)+\frac{1}{2}} e^{-(k-1)(b+1)}} \cdot e^{\frac{1}{12k(b+1)}-\frac{1}{12(k-1)(b+1)+1}} \\ 
    & \frac{\big((k-1)b \big)^{(k-1)b+\frac{1}{2}}e^{-(k-1)b}}{\big( kb\big)^{kb+\frac{1}{2}} e^{-kb}} \cdot e^{\frac{1}{12(k-1)b}-\frac{1}{12kb+1}}\,.
\end{align*}
It holds that
\[
    \frac{1}{12k(b+1)}-\frac{1}{12(k-1)(b+1)+1} < 0 \quad \text{and}\quad \frac{1}{12(k-1)b}-\frac{1}{12kb+1} < \frac{1}{12(k-1)b} \,,
\]
which implies
\begin{align*}
    \frac{\binom{k(b+1)}{b+1}}{\binom{kb}{b}} &< \frac{1}{b+1} \left(\frac{k}{k-1} \right)^{(k-1)(b+1)+\frac{1}{2}} k^{b+1} (b+1)^{b+1} e^{-(b+1)}
    \\ &\phantom{<}\,\cdot
    \left(\frac{k-1}{k} \right)^{(k-1)b+\frac{1}{2}} \left( \frac{1}{k}\right)^b \left( \frac{1}{b} \right)^b e^b e^{\frac{1}{12(k-1)b}} \\ 
    &= \frac{k^k}{(k-1)^{k-1}} \left(\frac{b+1}{b} \right)^b \frac{1}{e} e^{\frac{1}{12(k-1)b}}\,.
\end{align*}
Plugging this into equation~(\ref{equ:binom}) then yields 
\begin{equation*}
     \frac{\binom{k(b+1)}{b}}{\binom{kb}{b-1}} < \frac{(k-1)b+1}{(k-1)(b+1)+1} \frac{k^k}{(k-1)^{k-1}} \left(\frac{b+1}{b} \right)^{b+1} \frac{1}{e} e^{\frac{1}{12(k-1)b}}\,.
\end{equation*}

Thus, in order to prove the desired upper bound, we have to show 
\[
    \left(\frac{b+1}{b} \right)^{b+1} \frac{(k-1)b+1}{(k-1)(b+1)+1}  e^{\frac{1}{12(k-1)b}} \leq e  \,.
\]
Notice that all terms are decreasing in $k$, for $k\geq 2$, meaning that we can upper bound the left-hand side by 
\[
    \left(\frac{b+1}{b} \right)^{b+1} \frac{(k-1)b+1}{(k-1)(b+1)+1}  e^{\frac{1}{12(k-1)b}} \leq \left(\frac{b+1}{b} \right)^{b+1} \frac{b+1}{b+2}  e^{\frac{1}{12b}} = \left(\frac{b+1}{b} \right)^{b} \frac{(b+1)^2}{b(b+2)}  e^{\frac{1}{12b}}  \,.
\]
As a next step, we take logarithms on both sides. We make use the following useful inequality, which is true for $x>0$
\[ 
    \ln\left(1+x \right) \leq \frac{x(6+x)}{6+4x}\,.
\]
We obtain 
\begin{align*}
    b \ln\left(1+\frac{1}{b} \right) + \ln\left(1+\frac{1}{b^2+2b} \right)+\frac{1}{12b} \leq \frac{6b+1}{6b+4} + \frac{1}{b^2+2b} + \frac{1}{12b } \overset{!}{\leq} 1\,,
\end{align*}
where we also used the simpler upper bound $\ln(1+x)\leq x$ for the second term. One can show that the inequality above is indeed satisfied for $b>1.66$. This means that we have to show the case $b=1$ separately. There, we have
\[ 
    \frac{\binom{2k}{1}}{\binom{k}{0}} = 2k \leq \left( \frac{k}{k-1}\right)^{k-1} k = \frac{k^k}{(k-1)^{k-1}} \,,
\]
where the first inequality is true as $k\geq 2$. 
\end{proof}

Finally, we can finish the proof of Theorem~\ref{thm:mono} with the help of Lemmas~\ref{lem:fracHyper} and \ref{lem:fracBinom}. We have
\begin{align*}
    \frac{c'_{b+1}}{c'_b} &= \frac{b \cdot {}_2F_1(2,-b;(k-1)(b+1)+2;1-d)}{(b+1)\cdot{}_2F_1(2,1-b;(k-1)b+2;1-d)} \cdot  \frac{\binom{k(b+1)}{b}}{\binom{kb}{b-1}} \cdot \frac{(d-1)^{k-1}}{d^{k}} \\
    &< 1\cdot \frac{k^k}{(k-1)^{k-1}} \cdot \frac{(d-1)^{k-1}}{d^{k}} \leq 1 \,,
\end{align*}
where the last step is true according to Lemma~\ref{lem:conv}.
\end{proof}


\bibliographystyle{plain}
\bibliography{paper}

\end{document}